\newtheorem{definition}{Definition}
\newtheorem{lemma}{Lemma}
\newtheorem{theorem}{Theorem}
\algnewcommand{\LineComment}[1]{\State // #1}
\renewcommand\AB@affilsepx{, \protect\Affilfont}
\titlespacing\section{0pt}{6pt plus 4pt minus 2pt}{4pt plus 2pt minus 2pt}
\titlespacing\subsection{0pt}{6pt plus 3pt minus 2pt}{4pt plus 2pt minus 2pt}
\titlespacing\subsubsection{0pt}{6pt plus 3pt minus 2pt}{4pt plus 2pt minus 2pt}
\newcommand{\kv}{FUSEE\xspace} 
\newcommand{\rep}{{{SNAPSHOT}}\xspace}
\begin{document}

\date{}

\title{\Large \bf \kv: A Fully Memory-Disaggregated Key-Value Store \\ (Extended Version)}

\author[1\thanks{Work mainly done during the internship at Huawei Cloud.}]{Jiacheng Shen}
\author[2]{Pengfei Zuo}
\author[3]{Xuchuan Luo}
\author[1]{Tianyi Yang}
\author[4]{\\ Yuxin Su}
\author[3]{Yangfan Zhou}
\author[1]{Michael R. Lyu}
\affil[1]{The Chinese University of Hong Kong}
\affil[2]{Huawei Cloud}
\affil[3]{Fudan University}
\affil[4]{Sun Yat-sen University}

\maketitle

\begin{abstract}
    Distributed in-memory key-value (KV) stores are embracing the disaggregated memory (DM) architecture for higher resource utilization.
    However, existing KV stores on DM employ a \emph{semi-disaggregated} design that stores KV pairs on DM but manages metadata with monolithic metadata servers, hence still suffering from low resource efficiency on metadata servers. 
    To address this issue, this paper proposes \kv, a \textit{\textbf{FU}lly memory-di\textbf{S}aggr\textbf{E}gated} KV Stor\textbf{\textit{E}} that brings disaggregation to metadata management.
    \kv replicates metadata, {\em i.e.}, the index and memory management information, on memory nodes, manages them directly on the client side, and handles complex failures under the DM architecture.
    To scalably replicate the index on clients, \kv proposes a client-centric replication protocol that allows clients to concurrently access and modify the replicated index.
    To efficiently manage disaggregated memory, \kv adopts a two-level memory management scheme that splits the memory management duty among clients and memory nodes.
    Finally, to handle the metadata corruption under client failures, \kv leverages an embedded operation log scheme to repair metadata with low log maintenance overhead.
    We evaluate \kv with both micro and YCSB hybrid benchmarks.
    The experimental results show that \kv outperforms the state-of-the-art KV stores on DM by up to $4.5$ times with less resource consumption.
\end{abstract}
\section{Introduction}\label{sec:intro}
\noindent
Traditional in-memory key-value (KV) stores on monolithic servers have recently been ported to the disaggregated memory (DM) architecture for better resource efficiency~\cite{racehash,pdpm}.
Compared with monolithic servers, DM decouples the compute and memory resources into independent network-attached compute and memory pools~\cite{legoos,aifm,clio,concordia,MIND,zombieland,infiniswap,canfarmem}.
KV stores on DM can thus enjoy efficient resource pooling and have higher resource efficiency.

However, constructing KV stores on DM is challenging because the memory pool generally lacks the compute power to manage data and metadata.
Existing work~\cite{pdpm} proposes a \textit{semi-disaggregated} design that stores KV pairs in the disaggregated memory pool but retains metadata management on monolithic servers.
In such a design, the KV pair storage enjoys high resource utilization due to exploiting the DM architecture, but the metadata management does not. 
Many additional resources are exclusively assigned to the metadata servers in order to achieve high overall throughput~\cite{ChenLWL21,osdi06sage,sc14ren}.

To achieve full resource utilization, it is critical to bring disaggregation to the metadata management, \textit{i.e.}, building a \textit{fully memory-disaggregated} KV store.
The metadata, {\em i.e.}, the index and memory management information, should be stored in the memory pool and directly managed by clients rather than metadata servers. 
However, it is non-trivial to achieve a fully memory-disaggregated KV store due to the following challenges incurred from handling complex failures and the weak compute power in the memory pool.

\textit{1) Client-centric index replication.}
To tolerate memory node failures, clients need to replicate the index on memory nodes in the memory pool and guarantee the consistency of index replicas.
In existing replication approaches, {\em e.g.}, state machine replication~\cite{raft,epaxos,hermes,cr} and shared register protocols~\cite{robustEmulation,gryff,podc90hagit}, the replication protocols are executed by server-side CPUs.
These protocols cannot be executed on DM due to the weak compute power in the memory pool.
Meanwhile, if clients simply employ consensus protocols~\cite{raft,paxos,epaxos} or remote locks~\cite{pdpm}, the KV store suffers from poor scalability due to the explicit serialization of conflicting requests~\cite{scalingreplica,speedupconsensus,scalablerdmarpc,sherman}.

\textit{2) Remote memory allocation.}
Existing semi-disaggregated KV stores manage memory spaces with monolithic metadata servers.
However, in the fully memory-disaggregated setting, such a server-centric memory management scheme is infeasible.
Specifically, memory nodes cannot handle the compute-heavy fine-grained memory allocation for KV pairs due to their poor compute power~\cite{pdpm,clio}.
Meanwhile, clients cannot efficiently allocate memory spaces because multiple RTTs are required to modify the memory management information stored on memory nodes~\cite{MIND}.

\textit{3) Metadata corruption under client failures.}
In semi-disaggregated KV stores, client failures do not affect metadata because the CPUs of monolithic servers exclusively modify metadata.
However, clients directly access and modify metadata on memory nodes in the fully memory-disaggregated setting.
As a result, client failures can leave partially modified metadata accessible by others, compromising the correctness of the entire KV store.

To address these challenges, we propose \kv, a fully memory-disaggregated key-value store that has efficient index replication, memory allocation, and fault-tolerance on DM.
First, to maintain the strong consistency of the replicated index in a scalable manner, \kv proposes the \rep replication protocol.
The key to achieving scalability is to resolve write conflicts without involving the expensive request serialization~\cite{gryff}.
\rep adopts three simple yet effective conflict-resolution rules on clients to allow conflicts to be resolved collaboratively among clients instead of sequentially.
Second, to achieve efficient remote memory management, \kv employs a two-level memory management scheme that splits the server-centric memory management process into compute-light and compute-heavy tasks. 
The compute-light coarse-grained memory blocks are managed by the memory nodes with weak compute power, and the compute-heavy fine-grained objects are handled by clients.
Finally, to deal with the problem of metadata corruption, \kv adopts an embedded operation log scheme to resume clients' partially executed operations.
The embedded operation log reuses the memory allocation order and embeds log entries in KV pairs to reduce the log-maintenance overhead on DM.

We implement \kv from scratch and evaluate its performance using both micro and YCSB benchmarks~\cite{ycsb}.
Compared with Clover and pDPM-Direct~\cite{pdpm}, two state-of-the-art KV stores on DM, \kv achieves up to $4.5$ times higher overall throughput and exhibits lower operation latency with less resource consumption.
The code of \kv is available at \url{https://github.com/dmemsys/FUSEE}.

In summary, this paper makes the following contributions:
\begin{itemize}[noitemsep,topsep=0pt,parsep=0pt,partopsep=0pt]
    \item A fully memory-disaggregated KV store with disaggregated metadata and data that is resilient to failures on DM.
    \item A client-centric replication protocol that uses conflict resolution rules to enable clients to resolve conflicts collaboratively. The protocol is formally verified with TLA+~\cite{tla+} for safety and the absence of deadlocks under crash-stop failures.
    \item A two-level memory management scheme that leverages both memory nodes and clients to efficiently manage the remote memory space.
    \item An embedded operation log scheme to repair the corrupted metadata with low log maintenance overhead.
    \item The implementation and evaluation of \kv to demonstrate the efficiency and effectiveness of our design. 
\end{itemize}
\section{Background and Motivation}\label{sec:background}
\subsection{The Disaggregated Memory Architecture}
\noindent
The disaggregated memory architecture is proposed to address the resource underutilization issue of traditional datacenters composed of monolithic servers~\cite{legoos,aifm,clio,MIND,concordia,zombieland}.
DM separates CPUs and memory of monolithic servers into two independent hardware resource pools containing compute nodes (CNs) and memory nodes (MNs)~\cite{legoos,racehash,sherman,pdpm}.
CNs have abundant CPU cores and a small amount of memory as local caches~\cite{sherman}.
MNs host various memory media, {\em e.g.}, DRAM and persistent memory, to accommodate different application requirements with weak compute power.
CPUs in CNs directly access memory in MNs with fast remote-access interconnect techniques, such as one-sided RDMA (remote direct memory access), Omni-path~\cite{omnipath}, CXL~\cite{cxl}, and Gen-Z~\cite{genz}.
Each MN provides \texttt{READ}, \texttt{WRITE}, and atomic operations, {\em i.e.}, compare-and-swap (\texttt{CAS}) and fetch-and-add (\texttt{FAA}), for CNs to access memory data.
Besides, MNs own limited compute power ({\em e.g.}, 1-2 CPU cores) to manage local memory and establish connections from CNs, providing CNs with the \texttt{ALLOC} and \texttt{FREE} memory management interfaces.
Without loss of generality, in this paper, we consider CNs accessing MNs using one-sided RDMA verbs.

\begin{figure}
    \centering
    \subfloat[Clover]{
        \includegraphics[width=0.55\columnwidth]{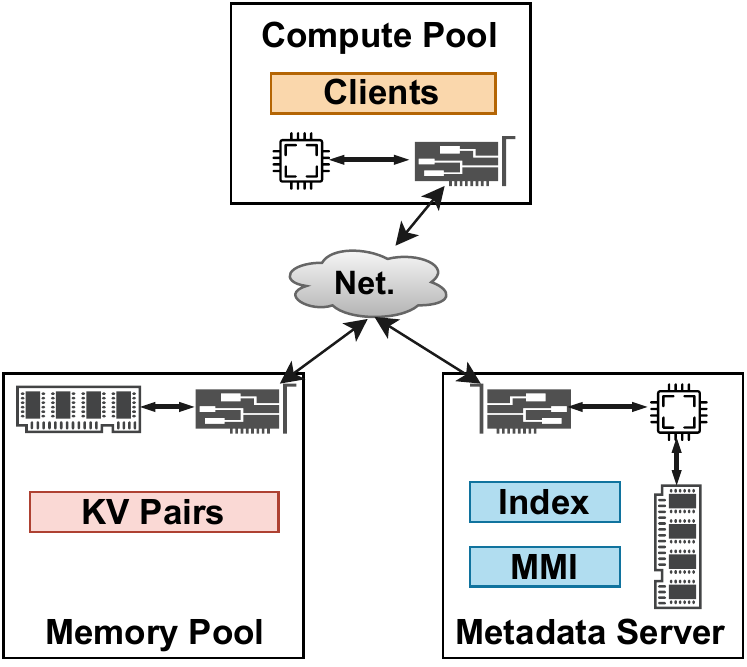}
        \label{fig:clover-arch}
    }
    \hspace{0.8cm}
    \subfloat[\kv]{
        \includegraphics[width=0.22\columnwidth]{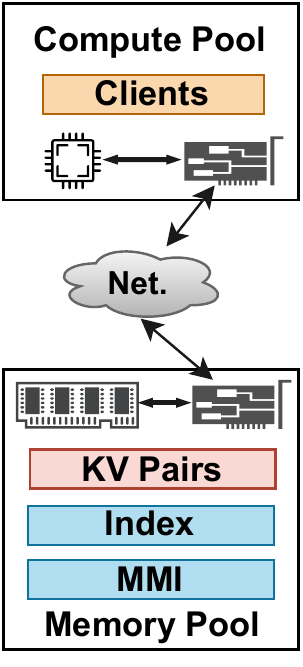}
        \label{fig:decent-arch}
    }
    \caption{Two architectures of memory-disaggregated KV stores. (a) The semi-disaggregated architecture (Clover~\cite{pdpm}). (b) The fully disaggregated architecture proposed in this paper.}
    \label{fig:memdisaggkv-arch}
    \vspace{-0.25in}
\end{figure}

\subsection{KV Stores on Disaggregated Memory}
\noindent
Clover~\cite{pdpm} is a state-of-the-art KV store built on DM.
It adopts a semi-disaggregated design that separates data and metadata to lower the ownership cost and prevent the compute power of data nodes from becoming the performance bottleneck.
As shown in Figure~\ref{fig:clover-arch}, Clover deploys clients on CNs and stores KV pairs on MNs.
It adopts additional monolithic metadata servers to manage the metadata, including \textit{memory management information (MMI)} and the \textit{hash index}.
For \texttt{SEARCH} requests, clients look up the addresses of the KV pairs from metadata servers and then fetch the data on MNs using \texttt{RDMA\_READ} operations.
For \texttt{INSERT} and \texttt{UPDATE} requests, clients allocate memory blocks from metadata servers with RPCs, write KV pairs to MNs with \texttt{RDMA\_WRITE} operations, and update the hash index on the metadata servers through RPCs.
To prevent clients' frequent requests from overwhelming the metadata servers, clients allocate a batch of memory blocks one at a time and cache the hash index locally.
As a result, Clover achieves higher throughput under read-intensive workloads with less resource consumption.

However, the semi-disaggregated design of Clover cannot fully exploit the resource efficiency of the DM architecture due to its monolithic-server-based metadata management.
On the one hand, monolithic metadata servers consume additional resources, including CPUs, memory, and RNICs.
On the other hand, many compute and memory resources have to be reserved and assigned to the metadata server of Clover to achieve good performance due to the CPU-intensive nature of metadata management~\cite{ChenLWL21,osdi06sage,sc14ren}.
To show the resource utilization issue of Clover, we evaluate its throughput with 2 MNs, 64 clients, and a metadata server with different numbers of CPU cores.
We control the number of CPU cores by assigning different percentages of CPU time with cgroup~\cite{cgroups}.
As shown in Figure~\ref{fig:rd-wr-tpt}, Clover has a low overall throughput with a small number of CPU cores assigned to its metadata server.
At least six additional cores have to be assigned until the metadata server is no longer the performance bottleneck.

To attack the problem, \kv adopts a \textit{fully memory-disaggregated} design that enables clients to directly access and modify the hash index and manage memory spaces on MNs, as shown in Figure~\ref{fig:decent-arch}.
Compared with the semi-disaggregated design, resource efficiency can be improved because client-side metadata management eliminates the additional metadata servers.
The overall throughput can also be improved because the computation bottleneck of metadata management no longer exists.
\section{Challenges}
\noindent
This section introduces the three challenges of constructing a fully memory-disaggregated KV store, {\em i.e.}, index replication, remote memory allocation, and metadata corruption.

\subsection{Client-Centric Index Replication}
\noindent
The index must be replicated to tolerate MN failures.
Strong consistency, {\em i.e.}, linearizability~\cite{lin}, is the most commonly adopted correctness standard for data replication because it reduces the complexity of implementing upper-level applications~\cite{gryff,hotos15phillipe,atc17yu}.
Linearizability requires that operations on an object appear to be executed in some total order that respects the operations' real-time order~\cite{lin}.
The key challenge of achieving a linearizable replicated hash index under the fully memory-disaggregated setting comes from the client-centric computation nature of DM.

\begin{figure}[t]
    \hspace{-1mm}
    \begin{minipage}[t]{.49\columnwidth}
        \centering
        \includegraphics[width=\columnwidth]{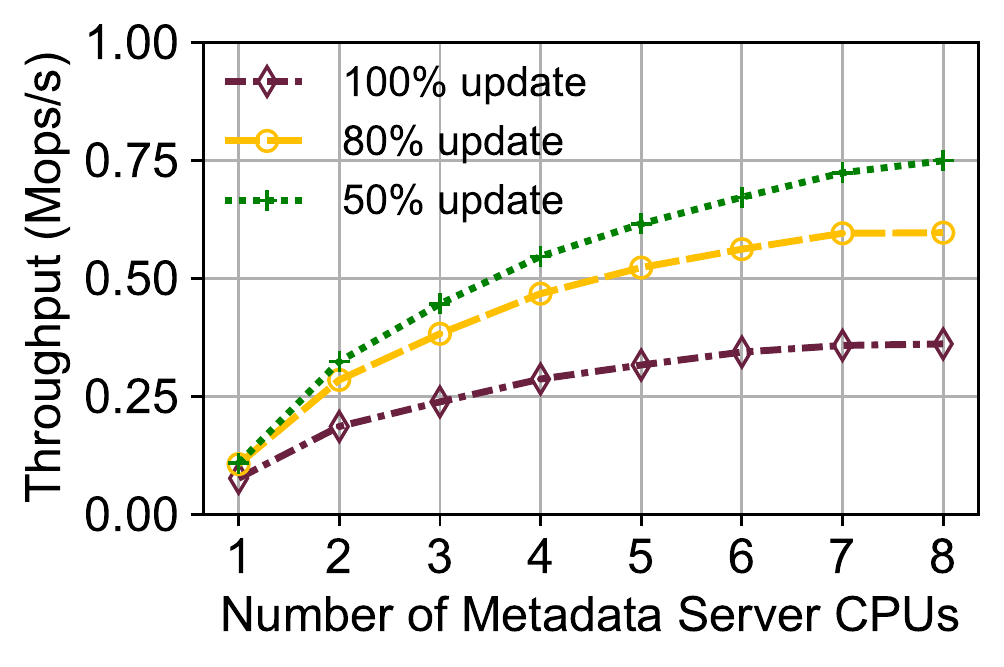}
        \caption{The throughput of Clover with an increasing number of metadata server CPUs.}
        \label{fig:rd-wr-tpt}
    \end{minipage}%
    \hspace{3mm}
    \begin{minipage}[t]{.46\columnwidth}
        \centering
        \includegraphics[width=\columnwidth]{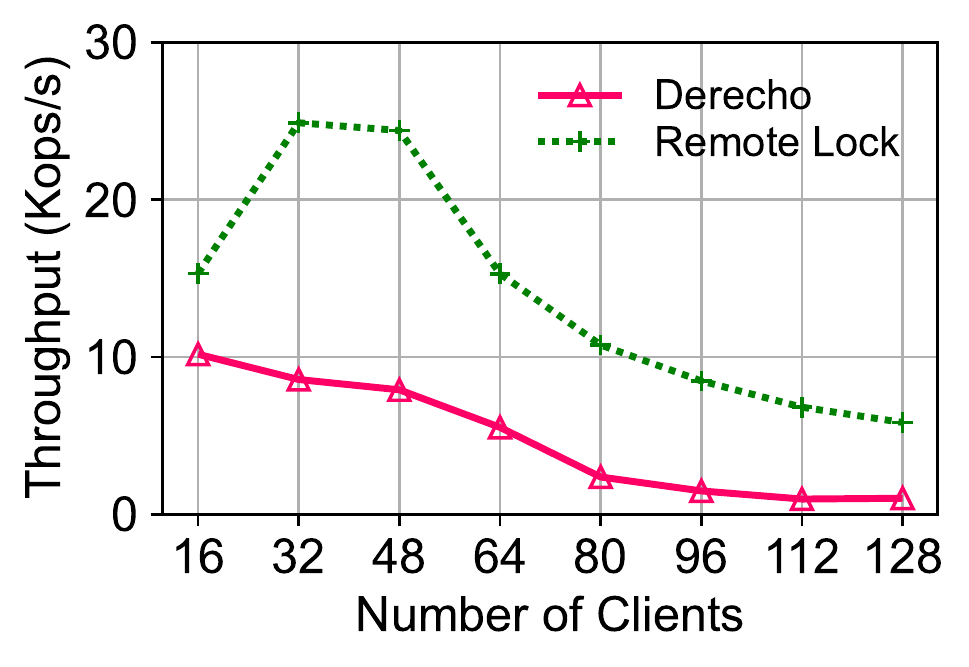}
        \caption{The throughput of Derecho~\cite{tocs19Sagar} and lock-based approaches.}
        \label{fig:scale-issue}
    \end{minipage}
    \vspace{-0.2in}
\end{figure}

First, existing replication methods are not applicable in the fully memory-disaggregated setting due to their server-centric nature.
State machine replication (SMR)~\cite{raft,epaxos,hermes,cr,craq,mencius,viewstamp} and shared register protocols~\cite{gryff,robustEmulation} are two major replication approaches that achieve linearizability.
However, both approaches are designed with a server-centric assumption that a data replica is exclusively accessed and modified by the CPU that manages the data.
First, the SMR approaches consider the CPU and the data replica as a state machine and achieve strong consistency by forcing the state machines to execute deterministic KV operations in the same global order~\cite{raft,viewstamp}.
Server CPUs are extensively used to reach a consensus on a global operation order and apply state transitions to data replicas.
Second, shared register protocols view the CPU and the data replica as a shared register with \texttt{READ} and \texttt{WRITE} interfaces.
Linearizability is achieved with a last-writer-wins conflict resolution scheme~\cite{robustEmulation} that forces a majority of shared registers to always hold data with the newest timestamps.
Shared register protocols also heavily rely on server-side CPUs to compare timestamps and apply data updates.
The challenge with the server-centric approaches is that in the fully memory-disaggregated scenario, there is no such management CPU because all clients directly access and modify the hash index with one-sided RDMA verbs.

Second, naively adopting consensus protocols or remote locks among clients results in poor throughput due to the expensive request serialization.
To show the performance issues of consensus protocols and remote locks, we store and replicate a shared object on two MNs and vary the number of concurrent clients.
We use a state-of-the-art consensus protocol Derecho~\cite{tocs19Sagar} and an RDMA \texttt{CAS}-based spin lock to ensure the strong consistency of the replicated object.
As shown in Figure~\ref{fig:scale-issue}, both Derecho and lock-based approaches exhibit poor overall throughput and cannot scale with the growing number of concurrent clients.

\subsection{Remote Memory Allocation}\label{sec:challenge-rma}
\noindent
The key challenge of managing DM is where to execute the memory-management computation.
There are two possible DM management approaches~\cite{MIND}, {\em i.e.}, compute-centric ones and memory-centric ones.
The compute-centric approaches store the memory management metadata on MNs and allow clients to allocate memory spaces by directly modifying the on-MN metadata.
Since the memory management metadata are shared by all clients, clients' accesses have to be synchronized.
As a result, compute-centric approaches suffer from the high memory allocation latency incurred by the expensive and complex remote synchronization process on DM~\cite{MIND}.
The memory-centric approaches maintain all memory management metadata on MNs with their weak compute power.
Such approaches are also infeasible because the poor memory-side compute power can be overwhelmed by the frequent fine-grained KV allocation requests from clients.
Although there are several approaches that conduct memory management on DM, they all target page-level memory allocation and rely on special hardware, {\em i.e.}, programmable switches~\cite{MIND} and SmartNICs~\cite{clio}, which are orthogonal to our problem.

\subsection{Metadata Corruption}
\noindent
In fully memory-disaggregated KV stores, crashed clients can leave partially modified metadata accessible by other healthy clients.
Since the metadata contains important system state, metadata corruption compromises the correctness of the entire KV store.    
First, crashed clients may leave the index in a partially modified state.
Other healthy clients may not be able to access data or even access wrong data with the corrupted index.
Second, crashed clients may allocate memory spaces but not use them, causing severe memory leakage.
Hence, in order to ensure the correctness of the KV store, the corrupted metadata has to be repaired under client failures.
\section{The \kv Design}\label{sec:design}

\subsection{Overview}\label{sec:overview}
\noindent
As shown in Figure~\ref{fig:ddckv-arch}, \kv consists of clients, MNs, and a master.
Clients provide \texttt{SEARCH}, \texttt{INSERT}, \texttt{DELETE}, and \texttt{UPDATE} interfaces for applications to access KV pairs.
MNs store the replicated memory management information (MMI), hash index, and KV pairs.
The master is a cluster management process responsible only for initializing clients and MNs and recovering data under client and MN failures.


\begin{figure}[t]
    \centering
    \includegraphics[width=0.75\columnwidth]{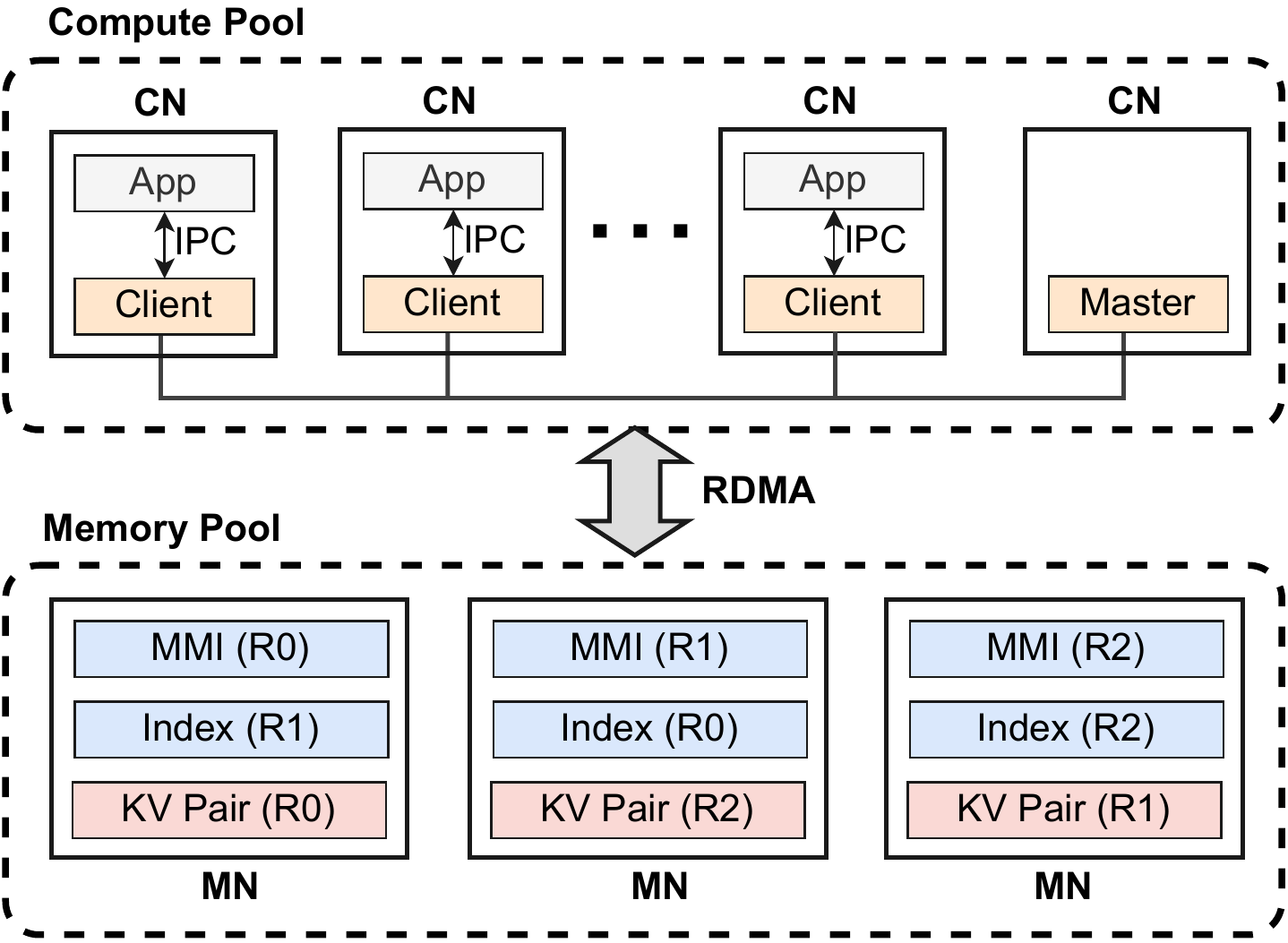}
    \caption{The \kv overview \textit{(MMI, Index, and KV pairs have multiple replicas, i.e., $R_0$, $R_1$, and $R_2$. $R_0$ is the primary replica.)}.}
    \label{fig:ddckv-arch}
    \vspace{-0.1in}
\end{figure}

\begin{figure}[t]
    \centering
    \includegraphics[width=0.75\columnwidth]{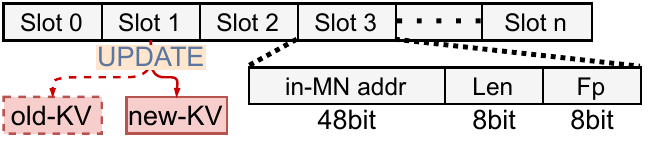}
    \caption{The structure of an index replica.}
    \label{fig:index-structure}
    \vspace{-0.25in}
\end{figure}

\kv replicates both the hash index and KV pairs to tolerate MN failures.
We adopt RACE hashing (Section~\ref{sec:racehash}) to index KV pairs and propose the \rep replication protocol to enforce the strong consistency of the replicated hash index (Section~\ref{sec:rep}).
A two-level memory management scheme is adopted to efficiently allocate and replicate variable-sized KV pairs (Section~\ref{sec:2lmm}).
\kv uses logs to handle the corrupted metadata under client failures and adopts an embedded operation log scheme to reduce the log maintenance overhead (Section~\ref{sec:embedlog}).
Other optimizations are introduced in Section~\ref{sec:opt} to further improve the system performance.

\subsection{RACE Hashing}\label{sec:racehash}
\noindent
RACE hashing is a one-sided RDMA-friendly hash index.
As shown in Figure~\ref{fig:index-structure}, it contains multiple 8-byte slots, with each storing a pointer referring to the address of a KV pair, an 8-bit fingerprint (Fp), {\em i.e.}, a part of the key's hash value, and the length of the KV pair (Len)~\cite{racehash}.
For \texttt{SEARCH} requests, a client reads the slots of the hash index according to the hash value of the target key and then reads the KV pair on MNs according to the pointer in the slot.
For \texttt{UPDATE}, \texttt{INSERT}, and \texttt{DELETE} requests, RACE hashing adopts an \emph{out-of-place modification} scheme.
It first writes a KV pair to MNs and then modifies the corresponding slot in the hash index to the address of the KV pair atomically with an \texttt{RDMA\_CAS}.
Nevertheless, the RACE hashing only supports a single replica.

\begin{figure}
    \centering
    \includegraphics[width=\columnwidth]{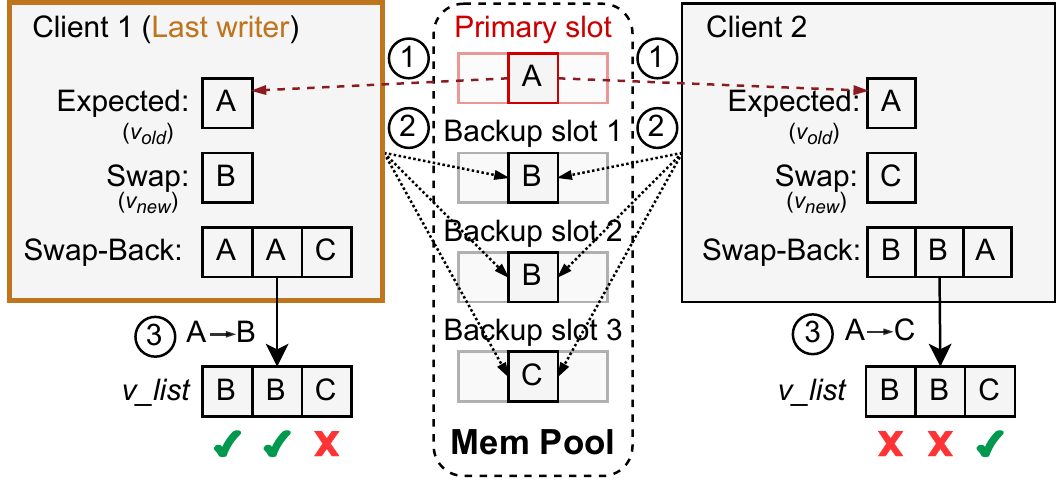}
    \caption{The \rep replication protocol.}
    \label{fig:replication}
    \vspace{-0.25in}
\end{figure}

\subsection{The \rep Replication Protocol}\label{sec:rep}
\noindent
In \kv, multiple clients concurrently read or write the same slot in the replicated hash index to execute \texttt{SEARCH} or \texttt{UPDATE} requests, as shown in Figure~\ref{fig:replication}. 
To efficiently maintain the strong consistency of slot replicas in the replicated hash index, \kv proposes the \rep replication protocol, a client-centric replication protocol that achieves linearizability without the expensive request serialization.

There are two main challenges to efficiently achieving linearizability under the fully memory-disaggregated setting.
First, how to protect readers from reading incomplete states during read-write conflicts.
Second, how to resolve write-write conflicts without expensively serializing all conflicting requests.
To address the first challenge, \rep splits the replicated hash index into a single primary replica and multiple backup replicas and uses backup replicas to resolve write conflicts.
Hence, incomplete states during write conflicts only appear on backup replicas and the primary replica always contains the correct and complete value. 
Readers can simply read the contents in the primary replica without perceiving the incomplete states.
To address the second challenge, \rep adopts a last-writer-wins conflict resolution scheme similar to shared register protocols.
\rep leverages the \emph{out-of-place modification} characteristic of RACE hashing that conflicting writers always write different values into the same slot because the values are pointers referring to KV pairs at different locations.
Three conflict-resolution rules are thus defined based on the values written by conflicting writers in backup replicas, which enable clients collaboratively to decide on a single last writer under write conflicts.

\begin{figure}[t]
\begin{algorithm}[H]
\begin{algorithmic}[1]
\caption{The \rep replication protocol}\label{alg:rep}
\small
\Procedure{READ}{$slot$}
  \State $v=\texttt{RDMA\_READ\_primary}(slot)$
  \If {$v = \texttt{FAIL}$} deal with failure
  \EndIf
  \State \Return{$v$}
\EndProcedure
\Procedure{WRITE}{$slot,v_{new}$}
  \State $v_{old} = \texttt{RDMA\_READ\_primary}(slot)$
  \State $v\_list = \texttt{RDMA\_CAS\_backups}(slot, v_{old}, v_{new})$
  \LineComment{\textit{Change all the $v_{old}$s in the $v\_list$ to $v_{new}$s.}}
  \State $v\_list = \texttt{change\_list\_value}(v\_list, v_{old}, v_{new})$
  \State $win=\texttt{EVALUATE\_RULES}(v\_list)$    \Comment{The last writer returns the winning rule while other writers return \texttt{LOSE}.}
  \If {$win=\texttt{Rule\_1}$}
    \State $\texttt{RDMA\_CAS\_primary}(slot, v_{old}, v_{new})$
  \ElsIf {$win \in \{\texttt{Rule\_2}, \texttt{Rule\_3}\}$}
    \State $\texttt{RDMA\_CAS\_backups}(slot, v\_list, v_{new})$
    \State $\texttt{RDMA\_CAS\_primary}(slot, v_{old}, v_{new})$
  \ElsIf {$win = \texttt{LOSE}$}
    \Repeat
      \State sleep a little bit
      \State $v_{check}=\texttt{RDMA\_READ\_primary}(slot)$
      \If {notified failure}
        \textit{goto} Line~24
      \EndIf
    \Until {$v_{check} \neq v_{old}$}
    \If {$v_{check} = \texttt{FAIL}$}
      \textit{goto} Line~24
    \EndIf
  \ElsIf {$win = \texttt{FAIL}$}
    \State deal with failure
  \EndIf
  \State \Return{}
\EndProcedure
\end{algorithmic}
\end{algorithm}
\vspace{-0.4in}
\end{figure}

\begin{figure}
\begin{algorithm}[H]
\begin{algorithmic}[1]
\caption{The rule evaluation procedure of \rep}\label{alg:rep-eval}
\small
\Procedure{evaluate\_rules}{$v\_list, slot, v_{new}, v_{old}$}
  \State $v_{maj} = $ The majority value in $v\_list$
  \State $cnt_{maj} = $ The number of $v_{maj}$ in $v\_list$
  \If {$\texttt{FAIL} \in v\_list$}
    \State \Return{$\texttt{FAIL}$}
  \ElsIf {$cnt_{maj} = \texttt{Len}(v\_list)$}
    \State \Return{$\texttt{Rule 1} \text{ if } v_{maj} = v_{new} \text{ else } \texttt{LOSE}$}
  \ElsIf {$2 * cnt_{maj} > \texttt{Len}(v\_list)$}
    \State \Return{$\texttt{Rule 2} \text{ if } v_{maj} = v_{new} \text{ else } \texttt{LOSE}$}
  \ElsIf {$v_{new} \not\in v\_list $}
    \State \Return{$\texttt{LOSE}$}
  \EndIf
  \State $v_{check} = \texttt{RDMA\_READ}(slot)$
  \If {$v_{check} = \texttt{FAIL}$}
    \State \Return{$\texttt{FAIL}$}
  \ElsIf {$v_{check} \neq v_{old}$}
    \State \Return{$\texttt{FINISH}$}
  \ElsIf {$min(v\_list) = v_{new}$}
    \State \Return{$\texttt{Rule 3}$}
  \EndIf
  \State \Return{$\texttt{LOSE}$}
\EndProcedure
\end{algorithmic}
\end{algorithm}
\vspace{-0.4in}
\end{figure}

Algorithm~\ref{alg:rep} shows the \texttt{READ} and \texttt{WRITE} processes of the \rep replication protocol.
Here we focus on the execution of \rep when no failure occurs and leave the discussion of failure handling in Section~\ref{sec:execution}.
We call the slots in the primary and backup hash indexes primary slots and backup slots, respectively.

For \texttt{READ} operations, clients directly read the values in the primary slots using \texttt{RDMA\_READ}.
For \texttt{WRITE} operations, \rep first resolves write conflicts by letting conflicting writers collaboratively decide on a last writer with three conflict resolution rules and then let the decided last writer modify the primary slot.
Figure~\ref{fig:replication} shows the process that two clients simultaneously \texttt{WRITE} the same slot.
The corresponding algorithms are shown in Algorithms~\ref{alg:rep} and~\ref{alg:rep-eval}.
Clients first read the value in the primary slot as $v_{old}$ (\textcircled{\footnotesize 1}).
Then each client modifies all backup slots by broadcasting \texttt{RDMA\_CAS} operations (\textcircled{\footnotesize 2}) with $v_{old}$ as the expected value and $v_{new}$ as the swap value.
On receiving an \texttt{RDMA\_CAS}, the RNICs on MNs atomically modify the value in the target slot only if $v_{old}$ matches the current value in the slot.
Since all writers initiate \texttt{RDMA\_CAS} operations with the same $v_{old}$ and different $v_{new}$s and all backup slots initially hold $v_{old}$, the atomicity of \texttt{RDMA\_CAS} ensures that each backup slot can only be modified once by a single writer.
As a result, the values in all backup slots will be fixed after each of them has received one \texttt{RDMA\_CAS} from one writer~\footnote{That the process that all conflicting clients broadcast \texttt{RDMA\_CAS}es to modify backup slots is just like taking a snapshot, which is why the replication protocol is named \rep.}.
Meanwhile, since an \texttt{RDMA\_CAS} returns the value in the slot before it is modified, all clients can perceive the new values in the backup slots (\textcircled{\footnotesize 3}) through the return values of the broadcast of \texttt{RDMA\_CAS} operations. 
The return values are denoted as $v\_list$ in Algorithm~\ref{alg:rep}.

With $v\_list$, \rep defines the following three rules to let conflicting clients collaboratively decide on a last writer:
\begin{itemize}[noitemsep,topsep=0pt,parsep=0pt,partopsep=0pt]
    \item[] \textbf{Rule 1}: A client that has successfully modified all the backup slots is the last writer.
    \item[] \textbf{Rule 2}: A client that has successfully modified a majority of backup slots is the last writer.
    \item[] \textbf{Rule 3}: If no last writer can be decided with the former two rules, the client that has written the minimal target value ($v_{new}$) is considered as the last writer.
\end{itemize}
The three rules are evaluated sequentially as shown in Algorithm~\ref{alg:rep-eval}.
\textbf{Rule 1} provides a fast path when there are no conflicting modifications.
\textbf{Rule 2} preserves the most successful CAS operations to minimize the overhead of executing atomic operations on RNICs when conflicts are rare~\cite{designGuideline}.
Finally, \textbf{Rule 3} ensures that the protocol can always decide on the last writer.
To ensure the uniqueness of the last write, a client issues another \texttt{RDMA\_READ} to check if the primary slot has been modified (Line 12, Algorithm~\ref{alg:rep-eval}) before evaluating \textbf{Rule 3}.
If the primary slot has not been modified, then the \texttt{RDMA\_CAS\_backups} (Line~7, Algorithm~\ref{alg:rep}) of the client must happen before the last writer modifies the primary slot.
Hence, it is safe to evaluate \textbf{Rule 3} because the $v\_list$ must contain the value of the last writer if it has already been decided.
Otherwise, \textbf{Rule 3} will not be evaluated because the modification of the primary slot means the decision of a last writer.
Relying on the three rules, a unique last writer can be decided without any further network communications.
For example, in Figure~\ref{fig:replication}, Client 1 is the last writer according to \textbf{Rule 2}.
Client 1 then modifies the backup slots that do not yet contain its proposed value using \texttt{RDMA\_CAS}es and then modifies the primary slot.
Other conflicting clients iteratively \texttt{READ} the value in the primary slot and return success after finding the change in the primary slot.
The primary slot may remain unmodified only under the situation when the last writer crashed, which will be discussed in Section~\ref{sec:execution}.

\textbf{Correctness.}
The \rep replication protocol guarantees linearizability of the replicated hash indexes with last-writer-wins conflict resolution like shared register protocols~\cite{gryff,robustEmulation}.
We briefly demonstrate the correctness of \rep using the notion of the linearizable point of KV operations.
A formal proof is shown in Appendix~\ref{sec:proof}.
A linearizable point is a point when an operation atomically takes effect in its invocation and completion~\cite{lin}.
For \texttt{READ}, the linearizable point happens when it gets the value in the primary slot.
For \texttt{WRITE} operations, the linearizable point of the last writer happens when it modifies the primary slot.
Linearizable points of other conflicting writers appear instantly before the last writer modifies the primary slot.
Conflicts between readers and the last writer are resolved by RNICs because the last writer atomically modifies the primary slot using \texttt{RDMA\_CAS} operations and readers access the primary slot using \texttt{RDMA\_READ} operations.

\textbf{Performance.}
\rep guarantees a bounded worst-case latency when clients \texttt{WRITE} the hash index.
Under the situation when \textbf{Rule~1} is triggered, 3 RTTs are required to finish a \texttt{WRITE} operation.
Under situations when \textbf{Rule~2} or \textbf{Rule~3} is triggered, 4 or 5 RTTs are required, respectively.

\begin{figure}[t]
    \centering
    \includegraphics[width=0.77\columnwidth]{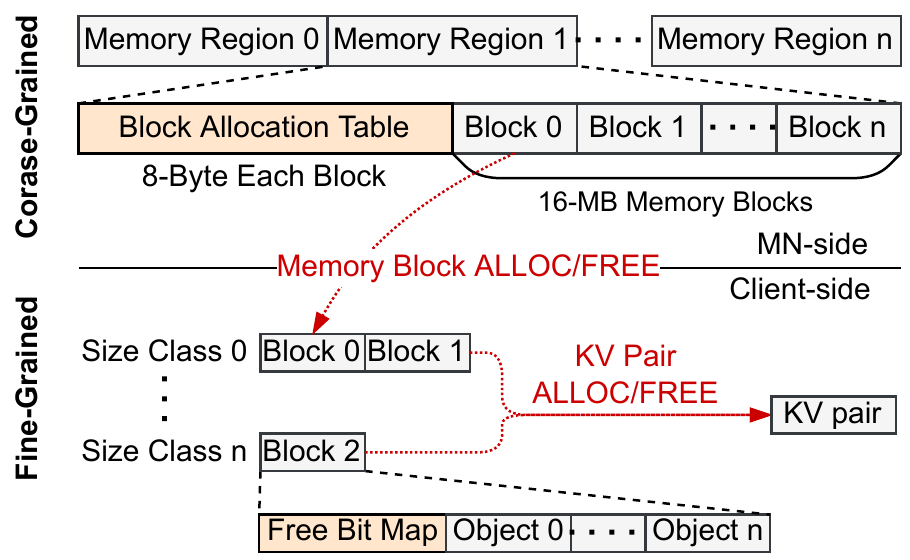}
    \caption{The two-level memory management scheme.}
    \label{fig:2lmm}
    \vspace{-0.25in}
\end{figure}

\subsection{Two-Level Memory Management}\label{sec:2lmm}
\noindent
Memory management is responsible for allocating, replicating, and freeing memory spaces for KV pairs on MNs.
As discussed in Section~\ref{sec:challenge-rma}, the key challenge of DM management is that conducting the management tasks solely on clients or on MNs cannot satisfy the performance requirement of frequent memory allocation for KV requests.
\kv addresses this issue via a two-level memory management scheme.
The key idea is to split the server-centric memory management tasks into compute-light coarse-grained management and compute-heavy fine-grained management run on MNs and clients.

\kv first replicates and partitions the 48-bit memory space on multiple MNs.
Similar to FaRM~\cite{farm}, \kv shards the memory space into 2GB memory regions and maps each region to $r$ MNs with consistent hashing~\cite{consistenthashing}, where $r$ is the replication factor.
Specifically, consistent hashing maps a region to a position in a hash ring. 
The replicas are then stored at the $r$ MNs successively following the position and the primary region is placed on the first of the $r$ MN.

Figure~\ref{fig:2lmm} shows the two-level memory allocation of \kv.
Allocating a memory space for a KV pair happens before writing the KV pair, as introduced in Section~\ref{sec:overview}.
The first level is the coarse-grained MN-side memory block allocation with low computation requirements.
Each MN partitions its local memory regions into coarse-grained memory blocks, {\em e.g.}, 16 MB, and maintains a block allocation table ahead of each region.
For each memory block, the block allocation table records a client ID (CID) that allocates it.
Clients allocate memory blocks by sending \texttt{ALLOC} requests to MNs.
On receiving an \texttt{ALLOC} request, an MN allocates a memory block from one of its primary memory regions, records the client ID in the block allocation tables of both primary and backup regions, and replies with the address of the memory block to the client.
The coarse-grained memory allocation information is thus replicated on $r$ MNs and can survive MN failures.
The second level is the fine-grained client-side object allocation that allocates small objects to hold KV pairs.
Specifically, clients manage the blocks allocated from MNs exclusively with slab allocators~\cite{slab}.
The client-side slab allocators split memory blocks into objects of distinct size classes.
A KV pair is then allocated from the smallest size class that fits it.

The allocated objects can be freed by any client.
To efficiently reclaim freed memory objects on client sides, \kv stores a free bit map ahead of each memory block, as shown in Figure~\ref{fig:2lmm}, where each bit indicates the allocation state of one object in the memory block.
The free bit map is initialized to be all zeros when a block is allocated.
To free an object, a client sets the corresponding bit to `1' in the free bit map with an \texttt{RDMA\_FAA} operation.
By reading the free bit map, clients can easily know the freed objects in their memory blocks and reclaim them locally.
\kv frees and reclaims memory objects periodically using background threads in a batched manner to avoid the additional RDMA operations on the critical paths of KV accesses.
The correctness of concurrently accessing KV pairs and reclaiming memory spaces is guaranteed by RACE hashing~\cite{racehash}, where clients check the key and the CRC of the KV pair on data accesses.

\begin{figure}
    \centering
    \subfloat[The embedded log entry.] {
        \includegraphics[width=0.75\columnwidth]{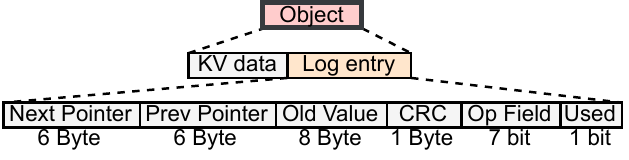}
        \label{fig:embedlog-entry}
    }\\
    \vspace{-0.15in}
    \subfloat[The organization of the embedded operation log.]{
        \includegraphics[width=0.73\columnwidth]{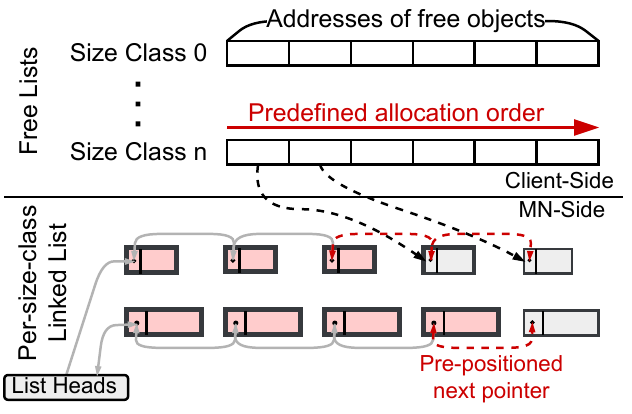}
        \label{fig:embedlog-org}
    }
    \caption{The embedded operation log.}\label{fig:embedlog}
    \vspace{-0.25in}
\end{figure}

\subsection{Embedded Operation Log}\label{sec:embedlog}
\noindent
Operation logs are generally adopted to repair the partially modified hash index incurred by crashed clients.
Conventional operation logs record a log entry for each KV request that modifies the hash index.
The log entries are generally written in an append-only manner so that the order of log entries reflects the execution order of KV requests.
The recovery process can thus find the crashed request and fix the corrupted metadata by scanning the ordered log entries.
However, constructing operation logs incurs high log maintenance overhead on DM because writing log entries adds remote memory accesses on the critical paths of KV requests.

To reduce the log maintenance overhead on DM, \kv adopts an \emph{embedded operation log} scheme that embeds log entries into KV pairs.
The embedded log entry is written together with its corresponding KV pair with one \texttt{RDMA\_WRITE} operation.
The additional RTTs required for persisting log entries are thus eliminated.
However, by embedding log entries in KV pairs, the execution order of KV requests cannot be maintained because the log entries are no longer continuous.
To address this problem, the embedded operation log scheme maintains per-size-class linked lists to organize the log entries of a client in the execution order of KV requests.
As shown in Figure~\ref{fig:embedlog-org}, a per-size-class linked list is a doubly linked list that links all allocated objects of the size class in the order of their allocations.
The object allocation order reflects the execution order of KV requests because all KV requests that modify the hash index, {\em e.g.}, \texttt{INSERT} and \texttt{UPDATE}, need to allocate objects for new KV pairs.
For \texttt{DELETE}, \kv allocates a temporary object recording the log entry and the target key and reclaims the object on finishing the \texttt{DELETE} request.
\kv stores the list heads on MNs during the initialization of clients, which will be accessed during the recovery process of clients (Section~\ref{sec:execution}).

\begin{figure*}
    \centering
    \includegraphics[width=1.7\columnwidth]{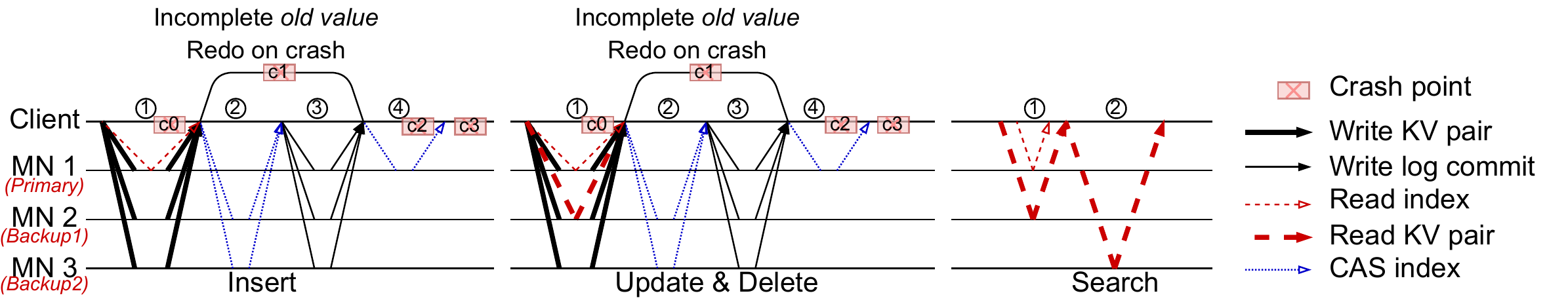}
    \caption{The workflows of different KV requests. {\it \small \texttt{INSERT}: \textcircled{\scriptsize 1} write the KV pair to all replicas and read the primary index slot. \textcircled{\scriptsize 2} CAS all backup slots. \textcircled{\scriptsize 3} write the old value to the log header. \textcircled{\scriptsize 4} CAS the primary slot. \texttt{UPDATE} \& \texttt{DELETE}: \textcircled{\scriptsize 1} write the KV pair, read the primary slot, and read the KV pair according to the index cache. \textcircled{\scriptsize 2} CAS backup slots. \textcircled{\scriptsize 3} write the old value to the log header. \textcircled{\scriptsize 4} CAS the primary slot. \texttt{SEARCH}: \textcircled{\scriptsize 1} read the primary slot and the KV pair according to the index cache. \textcircled{\scriptsize 2} read the KV pair on cache misses.}}
    \label{fig:kv-ops}
    \vspace{-0.25in}
\end{figure*}

An embedded log entry is a 22-byte data structure stored behind KV pairs, as shown in Figure~\ref{fig:embedlog-entry}.
It contains a 6-byte \textit{next pointer}, a 6-byte \textit{prev pointer}, an 8-byte \textit{old value}, a 1-byte \textit{CRC}, a 7-bit \textit{opcode}, and a \textit{used} bit.
The \textit{next pointer} points to the next object of the size class that will be allocated and the \textit{prev pointer} points to the object allocated before the current one.
The \textit{old value} records the old value of the primary slot for recovery proposes, which will be discussed in Section~\ref{sec:execution}.
The 1-byte \textit{CRC} is adopted to check the integrity of the \textit{old value} under client failures.
The \textit{operation field} records the operation type, {\em i.e.}, \texttt{INSERT}, \texttt{UPDATE}, or \texttt{DELETE}, so that the crashed operation can be properly retried during recovery.
The \textit{used bit} indicates if an object is in-use or free.
Storing the \textit{used bit} at the end of the entire object can be used to check the integrity of an entire object. 
This is because the order-preserving nature of \texttt{RDMA\_WRITE} operations ensures that the used bit is written only after all other contents in the object have been successfully written.

\kv efficiently organizes per-size-class linked lists by co-designing the linked list maintenance process with the memory allocation process.
As shown in Figure~\ref{fig:embedlog-org}, for each size class, a client organizes the addresses of remote free objects locally as a free list.
Since an object is always allocated from the head of a local free list, the allocation order of each size class is pre-determined.
Based on the pre-determined order, for each allocation, a client pre-positions the \textit{next pointer} to point to the free object in the head of the local free list and the \textit{prev pointer} to point to the last allocated object of the size class.
Both the \textit{next pointer} and the \textit{prev pointer} are thus known before each allocation and the entire log entry can be written to MNs with the KV pair in a single \texttt{RDMA\_WRITE}.

Combined with the \rep replication protocol, the execution process is shown as follows.
First, for each writer, a log entry with an empty \textit{old value} and \textit{CRC} is written with the KV pair in a single \texttt{RDMA\_WRITE}.
Then, for the last writer of the \rep replication protocol, the \textit{old value} is modified to store the old value of the primary slot before the primary slot is modified.
For other non-last writers, the used bits in their corresponding KV log entries are reset to `0' after finding the modification of the primary slot.

\subsection{Optimizations}\label{sec:opt}
\noindent
\textbf{Adaptive index cache.}
Index caching is widely adopted on RDMA-based KV stores to reduce request RTTs~\cite{drtm,drtmh,xstore,pdpm}.
For a key, the index cache caches the remote addresses of the replicated index slots and the addresses of the KV pairs locally.
With the cached KV pair addresses, \texttt{UPDATE}, \texttt{DELETE}, and \texttt{SEARCH} requests can read KV pairs in parallel with searching the hash index, reducing an RTT on cache hits.
To guarantee cache coherence, an invalidation bit is stored together with each KV pair, which is used by clients to check whether the KV pair is valid or invalid. However, by accessing the index cache, invalid KV pairs ({\em e.g.}, outdated) can be fetched into clients, causing read amplification.

To attack the read amplification issue, \kv adaptive bypasses the index cache by distinguishing read-intensive and write-intensive keys.
For each cached key, \kv maintains an access counter and an invalid counter which increases by 1 each time the key is accessed or found to be invalid. 
A client calculates an \textit{invalid ratio} $I=\frac{\text{invalid counter}}{\text{access counter}}$ for each cached key.
The index cache is bypassed when accessing a key with $I > threshold$ because the key is write-intensive and the cached key address points to an invalid KV pair with high probability.
The \textit{invalid ratio} can adapt to workload changes, {\em i.e.}, a write-intensive key becomes read-intensive, because the access counter of the key keeps increasing while the invalid counter stops.
Besides, the adaptive scheme does not affect the \texttt{SEARCH} latency for most cases since only write-intensive keys bypass the cache.

\noindent
\textbf{RDMA-related optimizations.}
KV requests require multiple remote memory accesses.
\kv adopts doorbell batching and selective signaling~\cite{designGuideline} to reduce RDMA overhead.
Figure~\ref{fig:kv-ops} shows the procedures for executing different KV requests.
Each request consists of multiple phases with multiple network operations.
For each phase, \kv adopts doorbell batching~\cite{designGuideline} to reduce the overhead of transmitting network operations from user space to RNICs and selective signaling to reduce the overhead of polling RDMA completion queues. 
Consequently, each phase only incurs 1 network RTT.
For \texttt{INSERT}, \texttt{DELETE}, and \texttt{UPDATE} requests, four RTTs are required in general cases.
For \texttt{SEARCH} requests, at most two RTTs are required and only one RTT is required in the best case due to the index cache.
\section{Failure Handling}\label{sec:execution}
\noindent
Similar to existing replication protocols~\cite{hermes,cr,craq}, \kv relies on a fault-tolerant master with a lease-based membership service~\cite{ukharon} to handle failures.
The master maintains a membership lease for both clients and MNs so that clients always know alive MNs by periodically extending their leases.
The failures of both clients and MNs can be detected by the master when they no longer extend their leases.
Master crashes are handled by replicating the master with state machine replication~\cite{ukharon,craq,cr}.
We formally verify \kv in TLA+~\cite{tla+} for safety and absence of deadlocks under MN failures and the detailed illustration can be found in Appendix~\ref{sec:proof}.

\subsection{Failure Model}
\noindent
We consider a partially synchronous system where processes, {\em i.e.}, clients and MNs, are equipped with loosely synchronized clocks~\cite{partialsynchrony,ukharon,hermes}.
\kv assumes \textit{crash-stop} failures, where processes, {\em i.e.}, clients and MNs, may fail due to crashing and their operations are non-Byzantine.

Under this failure model, \kv guarantees linearizable operations, {\em i.e.}, each KV operation is atomically committed in a time between its invocation and completion~\cite{lin}.
All the objects of \kv are durable and available under an arbitrary number of client crashes and at most $r - 1$ MN crashes, where $r$ is the replication factor.

\subsection{Memory Node Crashes}
\noindent
MN crashes lead to failed accesses to KV pairs and hash slots.
For accesses to KV pairs, clients can access the backup replicas according to the consistent hashing scheme.

The complication comes from the unavailable primary and backup slots that affect the normal execution of index \texttt{READ} and \texttt{WRITE} operations.
\kv relies on the fault-tolerant master to execute operations on clients' behalves under MN failures.
We first introduce how clients \texttt{READ}/\texttt{WRITE} the replicated slots and then introduce the master's operations.

When executing index \texttt{WRITE} under MN crashes, \kv allows the last writer decided by the \rep replication protocol to continue modifying all \emph{alive} slots to the same value.
Other writers send RPC requests to the master and wait for the master to reply with a correct value in the replicated slots.
Under situations when no last writer can be decided, the master decides the last writer and modifies all the index slots on behalf of clients.
For \texttt{READ} operations, executions are not affected under the following two cases.
First, if the primary slot is still alive, clients can read the primary slot normally.
Second, if the primary slot crashes, clients read all \textit{alive} backup slots. 
If all \textit{alive} backup slots contain the same value, reading this value is safe because there are no write conflicts.
Otherwise, clients use RPCs and rely on the master to return a correct value for the crashed slot.
Since \texttt{READ} operations are only affected under write conflicts, most \texttt{READ} can continue under the read-intensive workloads that dominate in real-world situations~\cite{facebookKVBenchmark,twitterwl}.

On detecting MN crashes, the master first blocks clients from further modifying the crashed slots with the lease expiration.
The master then acts as a representative last writer that modifies all \emph{alive} slots to the same value.
Specifically, the master selects a value $v$ in an \emph{alive} backup slot and modifies all \emph{alive} slots to $v$.
Since the \rep protocol modifies the backup slots before the primary slot, the values in the backup slots are always newer than the primary slot.
Hence, the master choosing a value from a backup slot is correct because it proceeds the conflicting write operations.
In cases where all backup slots crash, the master selects the value in the primary slot.
Clients that receive old values from the master retry their write operations to guarantee that their new value is written.
The master then writes the \textit{old value} in the operation log header to prevent clients from redoing operations when recovering from crashed clients (Section~\ref{sec:client-crash}).
Finally, the master reconfigures new primary and backup slots and returns the selected value to all clients that wait for a reply.
After the reconfiguration of the primary and backup slots, all KV requests can be executed normally without involving the master.
During the whole process, only accesses to the crashed slots are affected and the blocking time can be short thanks to the microsecond-scale membership service~\cite{ukharon}.

\subsection{Client Crashes}\label{sec:client-crash}
\noindent
Crashed clients may result in two issues.
First, their allocated memory blocks remain unmanaged, causing memory leakage.
Second, other clients may be unable to modify a replicated index slot if the crashed client is the last writer.
The master uses embedded operation logs to address these two issues.

The recovery process is executed in the compute pool and consists of two steps, {\em i.e.}, memory re-management and index repair.
Memory re-management restores the coarse-grained memory blocks allocated by the client and the fine-grained object usage information of the client.
The recovery process first gets all memory blocks managed by the crashed client by letting MNs search for their local block allocation tables.
Then the recovery process traverses the per-size-class linked lists to find all used objects and log entries.
With the used objects and the allocated memory blocks, the recovery process can easily restore the free object lists of the crashed client.
Hence, all the memory spaces of the crashed client are re-managed.

The index repair procedure then fixes the partially modified hash index.
\kv deems all requests at the end of per-size-class linked lists as potentially crashed requests.
For incomplete log entries, {\em i.e.}, the \textit{used bit} at the end of the log entry is not set, the client must crashes during writing the KV pair (c0 in Figure~\ref{fig:kv-ops}).
The object is directly reclaimed without further operation since the writing of the object has not been completed.
For a log entry with an incomplete \textit{old value} according to the \textit{CRC} field, \kv redoes the request according to the \textit{operation field} and the KV pair.
Under this situation, either the request belongs to the last writer that crashed before committing the log (c1 in Figure~\ref{fig:kv-ops}), or it belongs to other non-last writers.
In the first case, the values in the backup slots may not be consistent and the primary slot has not been modified to a new value.
Redoing the request can make the backup and primary slots consistent.
In the second case, since the request of crashed non-last writers has not returned to clients, redoing the request does not violate the linearizability.
For a request with a complete \textit{old value}, the request must belong to a last writer.
However, the request may finish (c3) or crash before the primary slot is modified (c2).
The recovery process checks the value in the primary slot ($v_p$) and the value in the \textit{old value} ($v_{old}$) to distinguish c2 from c3.
If $v_p = v_{old}$, the request crashed before the primary was modified because $v_{old}$ records the value before index modification.
Since all backup slots are consistent, the recovery process modifies the primary slot to the new value and finishes the recovery.
Otherwise, the request is finished and no further operation is required.
After recovering the request, the master asynchronously checks content in the $v_{old}$s in log entries of the crashed client to recover its batched free operations.

\subsection{Mixed Crashes}
\noindent
In situations where clients and MNs crash together, \kv recovers the failures separately.
\kv first lets the master recover all MN crashes and then starts the recovery processes for failed clients.
KV requests can proceed because the master acts as the last writer for all blocked KV requests.
No request is committed twice because the master commits the operation logs on clients' behalves.
\begin{figure*}[t]
    \vspace{-0.1in}
    \centering
    \subfloat[\texttt{INSERT} latency CDF.]{
        \includegraphics[width=0.47\columnwidth]{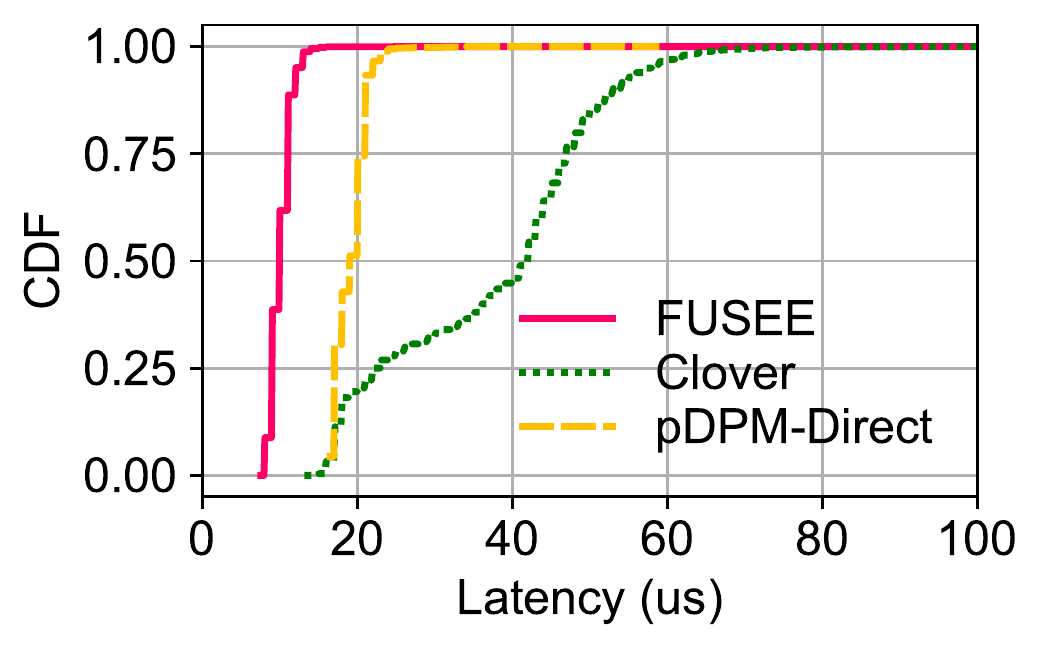}
        \label{fig:insert-lat-cdf}
    }
    \subfloat[\texttt{UPDATE} latency CDF.]{
        \includegraphics[width=0.47\columnwidth]{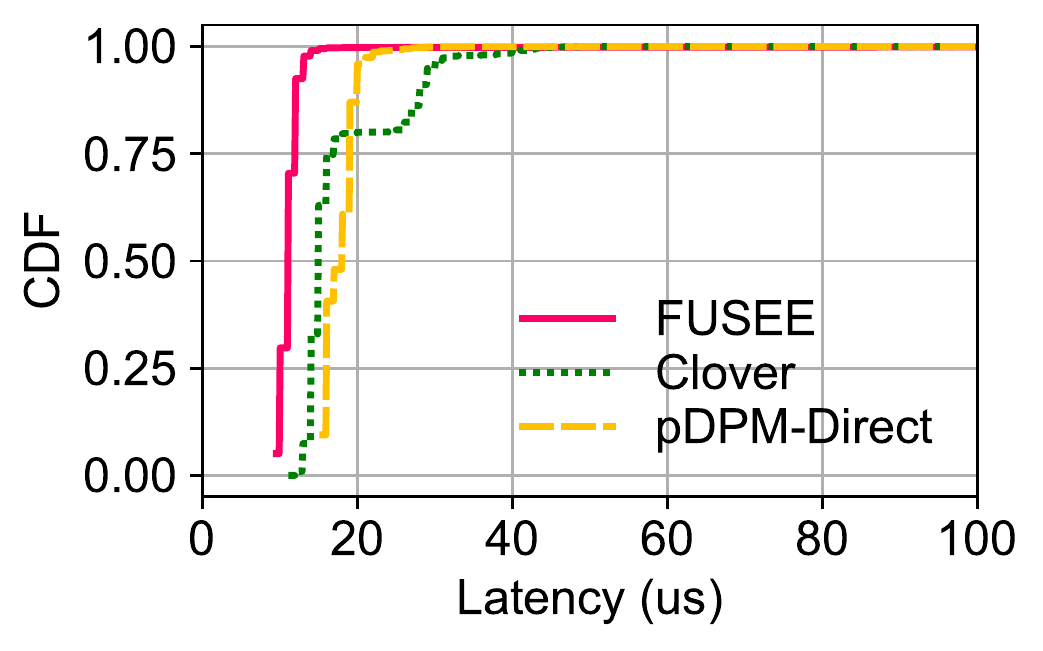}
        \label{fig:update-lat-cdf}
    }
    \subfloat[\texttt{SEARCH} latency CDF.]{
        \includegraphics[width=0.47\columnwidth]{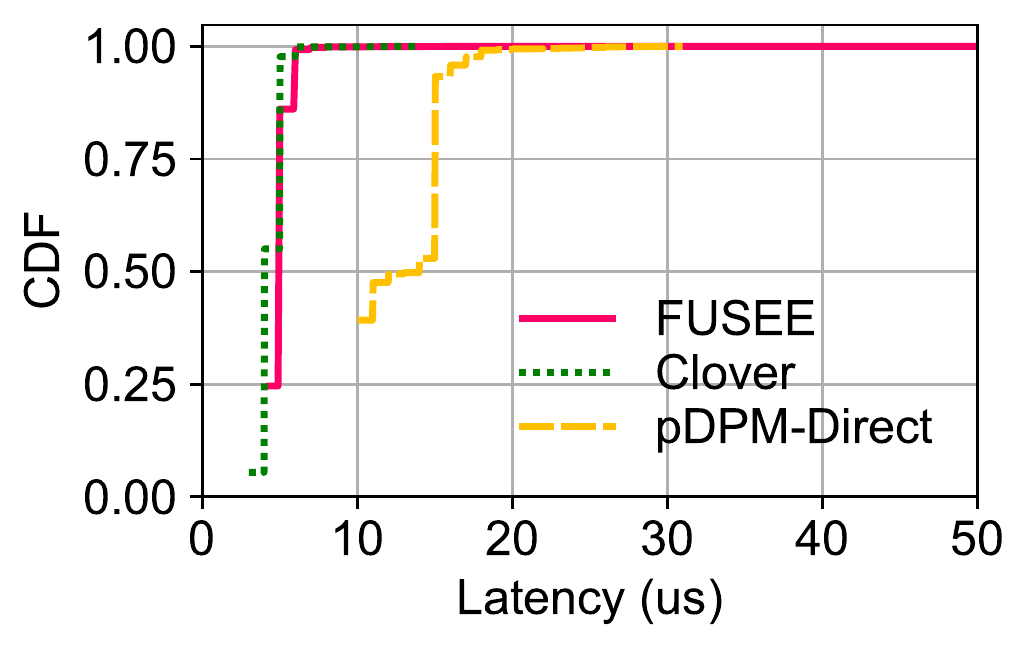}
        \label{fig:search-lat-cdf}
    }
    \subfloat[\texttt{DELETE} latency CDF.]{
        \includegraphics[width=0.47\columnwidth]{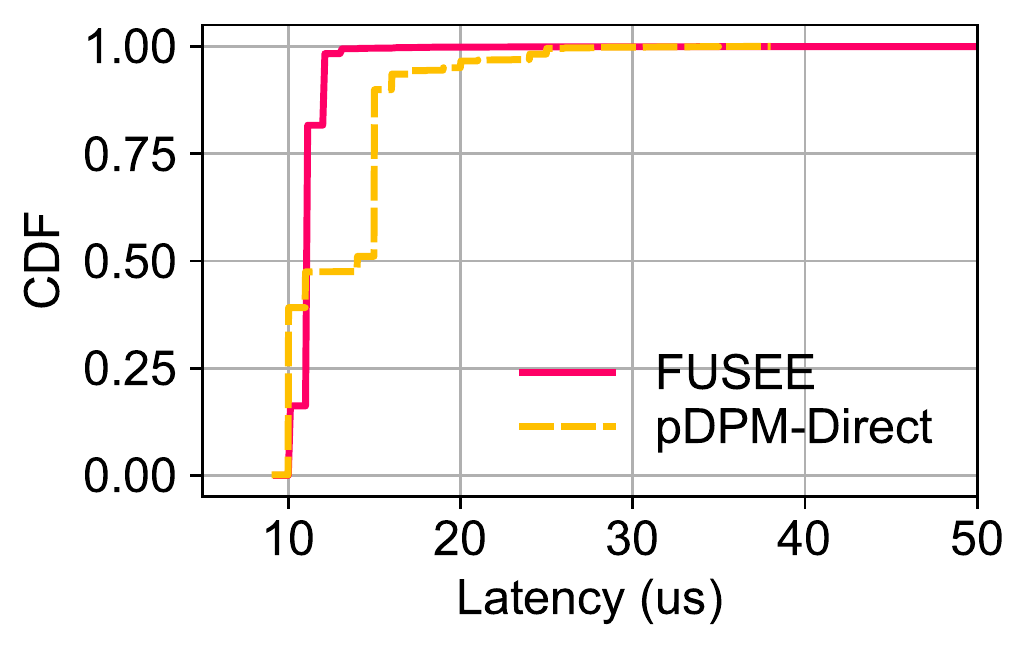}
        \label{fig:delete-lat-cdf}
    }
    \caption{The CDFs of different KV request latency under the microbenchmark.}
    \label{fig:lat-cdf}
    \vspace{-0.2in}
\end{figure*}

\section{Evaluation}\label{sec:eval}
\subsection{Experiment Setup}
\noindent
\textbf{Implementation.}
We implement \kv from scratch in C++ with 13k LOC.
We implement RACE hashing carefully according to the paper due to no available open-source implementations.
Coroutines are employed on clients to hide the RDMA polling overhead, as suggested in~\cite{fasst,racehash}.
The design of \kv is agnostic to the lower-level memory media of memory nodes, {\em i.e.}, any memory node with either persistent memory (PM) or DRAM that provides \texttt{READ}, \texttt{WRITE}, and 8-byte \texttt{CAS} interfaces is compatible.
We adopt monolithic servers with RNICs and DRAM to serve as MNs like Clover~\cite{pdpm} since we do not have access to smartNICs and PM.
Specifically, we start an MN process on a monolithic server to register RDMA memory regions and serve memory allocation RPCs with a UDP socket.
MN processes serve memory allocation requests with UDP sockets.
Since the socket \textit{receive} is a blocking system call, the process will be in the blocked state with no CPU usage most of the time.

\noindent
\textbf{Testbed.}
We run all experiments on 22 physical machines (5 MNs and 17 CNs) on the APT cluster of CloudLab~\cite{cloudlab}.
Each machine is equipped with an 8-core Intel Xeon E5-2450 processor, 16GB DRAM, and a 56Gbps Mellanox ConnectX-3 IB RNIC.
These machines are interconnected with 56Gbps Mellanox SX6036G switches.

\noindent
\textbf{Comparison.}
We compare \kv with two state-of-the-art KV stores on DM, {\em i.e.}, pDPM-Direct and Clover~\cite{pdpm}.
pDPM-Direct stores and manages the KV index and memory space on the clients.
It uses a distributed consensus protocol to ensure metadata consistency and locks to resolve data access conflicts.
We extend the open-source version of pDPM-Direct to support string keys for fair comparison in our evaluation.
Clover is a semi-disaggregated KV store that adopts monolithic servers to manage memory spaces and a hash index.
All \texttt{UPDATE} and \texttt{INSERT} requests have to go through the metadata server, requiring additional compute power.
For both pDPM-Direct and Clover, client-side caches are enabled following their default settings.
To show the effectiveness of \rep and the adaptive index cache, we implement \kv-CR and \kv-NC, two alternative versions of \kv.
\kv-CR replicates index modifications by sequentially \texttt{CAS}ing all replicas to enforce sequential accesses.
\kv-NC is the version of \kv without a client-side cache.
For all these methods, we evaluate their throughput and latency with both micro and YCSB~\cite{ycsb} benchmarks.

Since the open-source version of Clover and pDPM-Direct only support one index replica, we compare \kv with these two approaches with a single index replica and two data replicas in the microbenchmark (Section~\ref{sec:microbenchmark}) and YCSB performance (Section~\ref{sec:ycsb}) evaluations.
When evaluating \kv with a single index replica, the embedded log is constructed, but the commit of the log is skipped since committing the log is used to ensure the consistency of multiple index replicas.
The performance of \kv with multiple replicas is evaluated in the fault-tolerance evaluation (Section~\ref{sec:ft}).

\subsection{Microbenchmark Performance}\label{sec:microbenchmark}
\noindent
We use microbenchmarks to evaluate the operation throughput and latency of the three approaches.
For \kv and pDPM-Direct, we use 16 CNs and 2 MNs.
For Clover, we use 17 CNs and 2 MNs because it needs an additional metadata server, consuming 8 more CPU cores and an additional RNIC. 
We do not use multiple metadata servers for Clover because the current open-source implementation of Clover only supports a single metadata server.
We run 128 client processes on the 16 CNs, where each CN holds 8 clients.
The \texttt{DELETE} of Clover is not tested because Clover does not support it.

\begin{figure}[t]
    \hspace{3mm}
    \begin{minipage}[t]{0.45\columnwidth}
        \centering
        \includegraphics[width=\columnwidth]{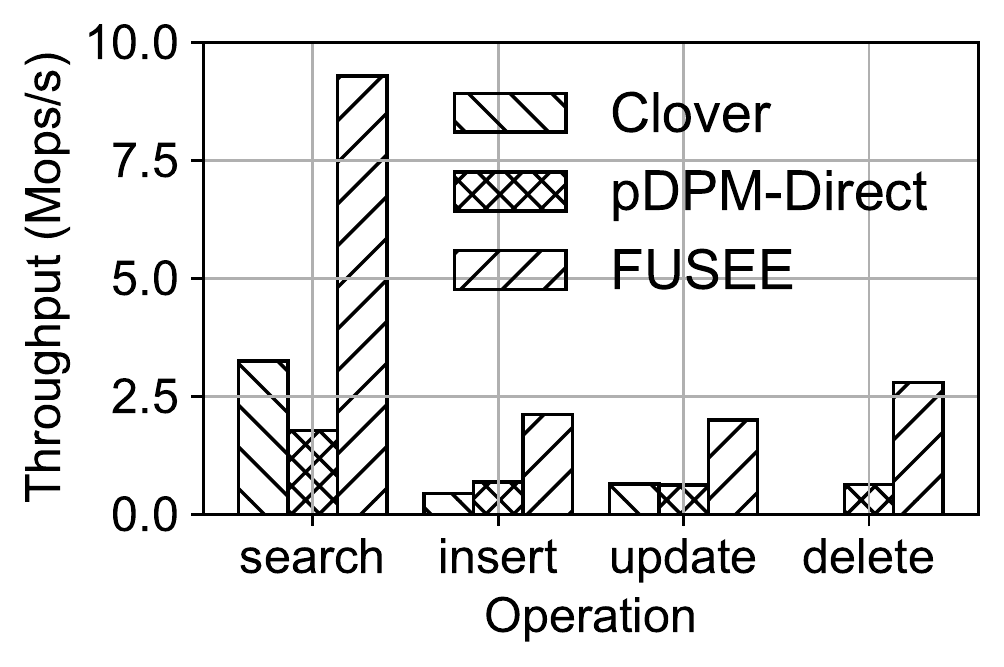}
        \caption{\footnotesize The throughputs of microbenchmark.}
        \label{fig:micro-tpt}
    \end{minipage}%
    \hspace{3mm}
    \begin{minipage}[t]{0.44\columnwidth}
        \centering
        \includegraphics[width=0.99\columnwidth]{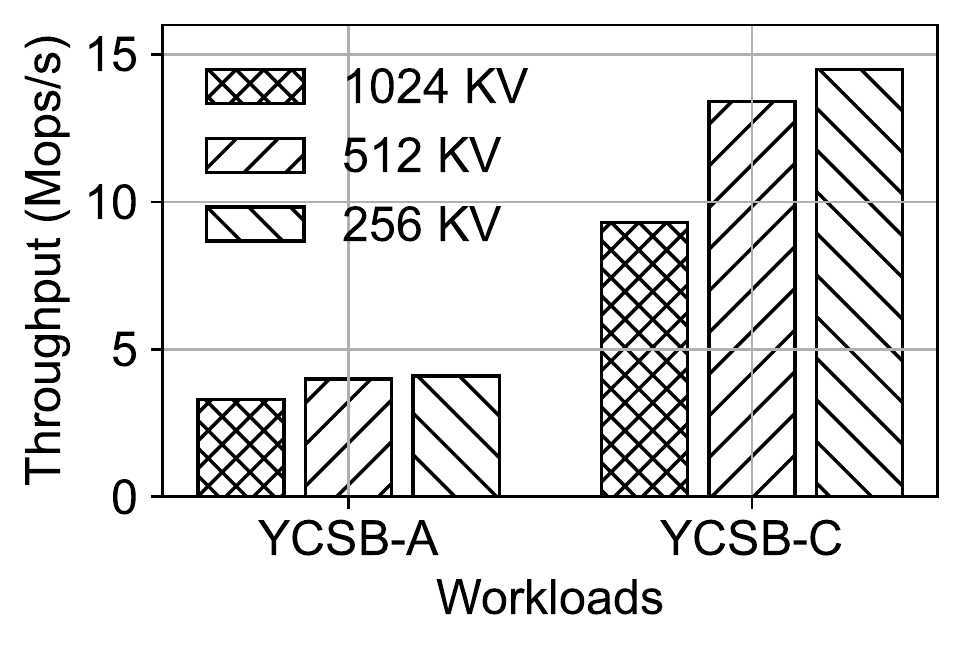}
        \caption{\footnotesize The throughput of \kv under different KV sizes.}
        \label{fig:kv-size}
    \end{minipage}
    \vspace{-0.25in}
\end{figure}

\textbf{Latency.}
To evaluate the latency of KV requests, we use a single client to iteratively execute each operation $10,000$ times.
Figure~\ref{fig:lat-cdf} shows the cumulative distribution functions (CDFs) of the request latency.
\kv performs the best on \texttt{INSERT} and \texttt{UPDATE}, since the \rep replication protocol has bounded RTTs.
\kv has a little higher \texttt{SEARCH} latency than Clover since \kv reads the hash index and the KV pair in a single RTT, which is slower than only reading the KV pair in Clover.
\kv has slightly higher \texttt{DELETE} latency than pDPM-Direct because \kv writes a log entry and reads the hash index in a single RTT, which is slower than just reading the hash index in pDPM-Direct.

\textbf{Throughput.}
Figure~\ref{fig:micro-tpt} shows the throughput of the three approaches.
The throughput of pDPM-Direct is limited by its remote lock, which causes extensive lock contention as the number of clients grows. 
For Clover, even though it consumes more hardware resources, {\em i.e.}, 8 additional CPU cores and an RNIC, the scalability is still lower than \kv.
This is because the CPU processing power of the metadata server bottlenecks its throughput.
On the contrary, \kv improves the overall throughput by eliminating the computation bottleneck of the metadata server and efficiently resolving conflicts with the \rep replication protocol.

\begin{figure*}
    \vspace{-0.2in}
    \centering
    \subfloat[A (\texttt{SEARCH}:\texttt{UPDATE} = 0.5:0.5).]{
        \includegraphics[width=0.48\columnwidth]{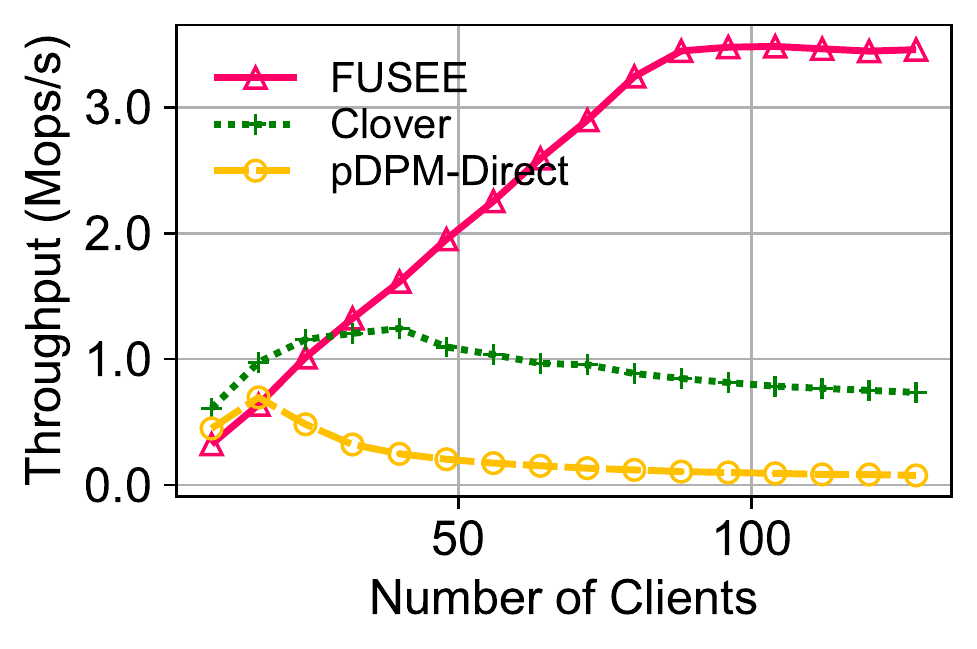}
        \label{fig:ycsb-tpt-a}
    }
    \subfloat[B (\texttt{SEARCH}:\texttt{UPDATE} = 0.95:0.05).]{
        \includegraphics[width=0.48\columnwidth]{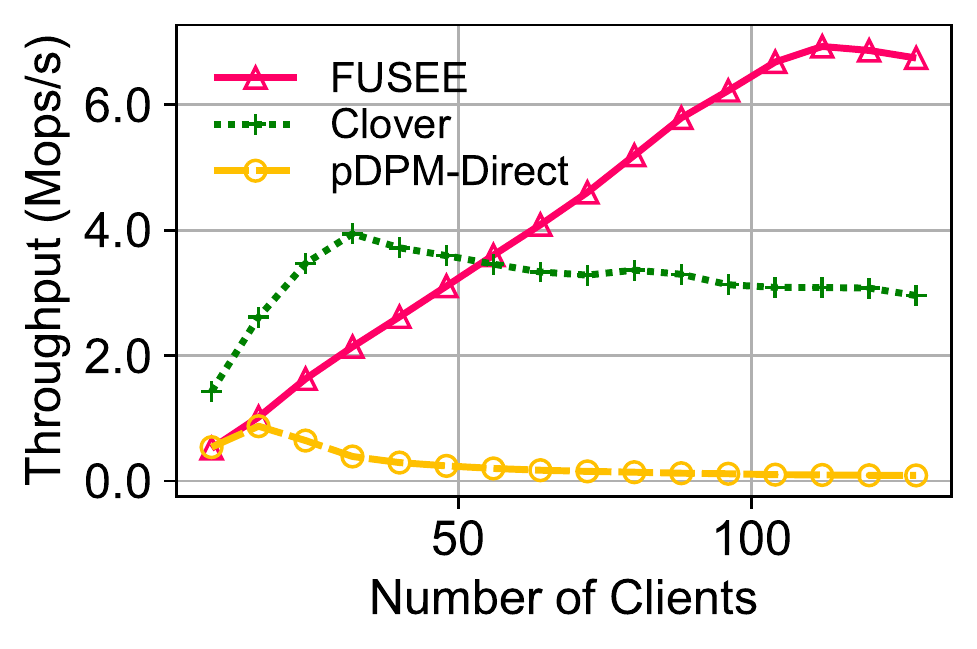}
        \label{fig:ycsb-tpt-b}
    }
    \subfloat[C (100\% \texttt{SEARCH}).]{
        \includegraphics[width=0.48\columnwidth]{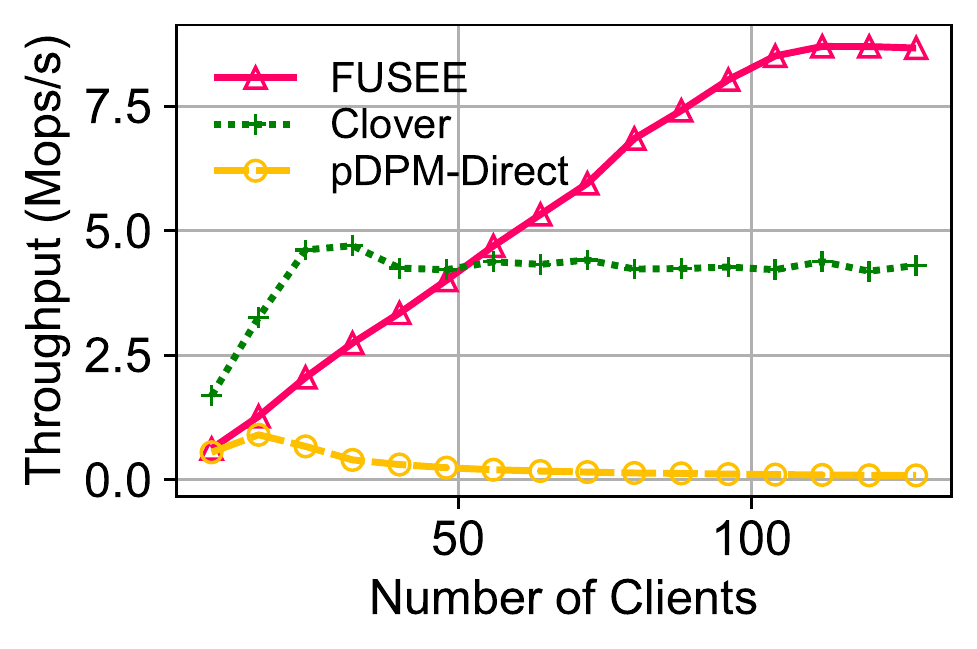}
        \label{fig:ycsb-tpt-c}
    }
    \subfloat[D (\texttt{SEARCH}:\texttt{INSERT} = 0.95:0.05).]{
        \includegraphics[width=0.48\columnwidth]{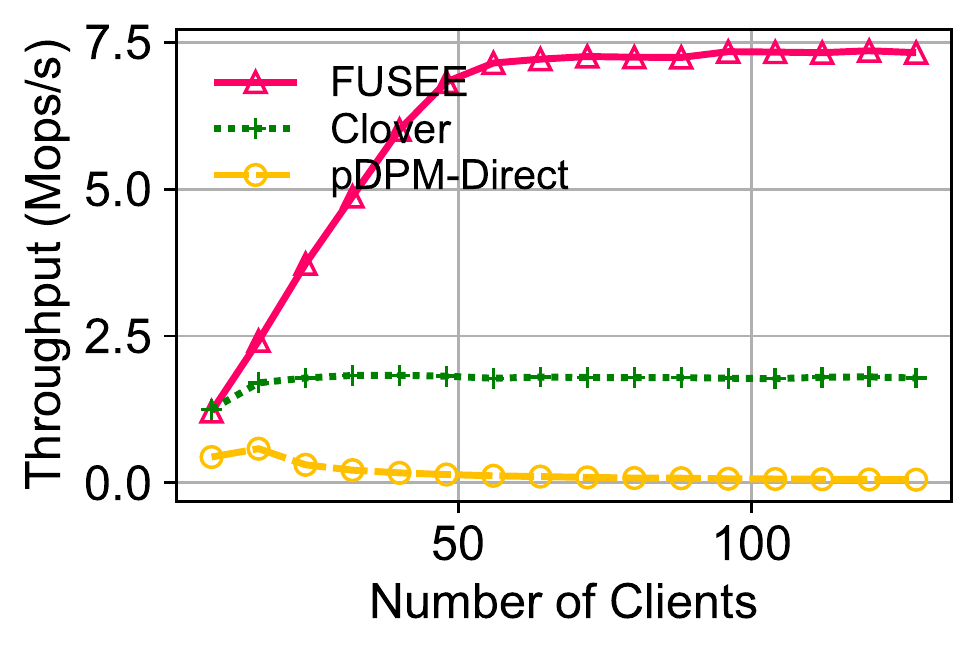}
        \label{fig:ycsb-tpt-d}
    }
    \caption{The scalability of \kv under different YCSB workloads.}
    \label{fig:ycsb-scalability}
    \vspace{-0.25in}
\end{figure*}

\begin{figure}[t]
    \vspace{-0.15in}
    \centering
    \subfloat[YCSB-A throughput.]{
        \includegraphics[width=0.45\columnwidth]{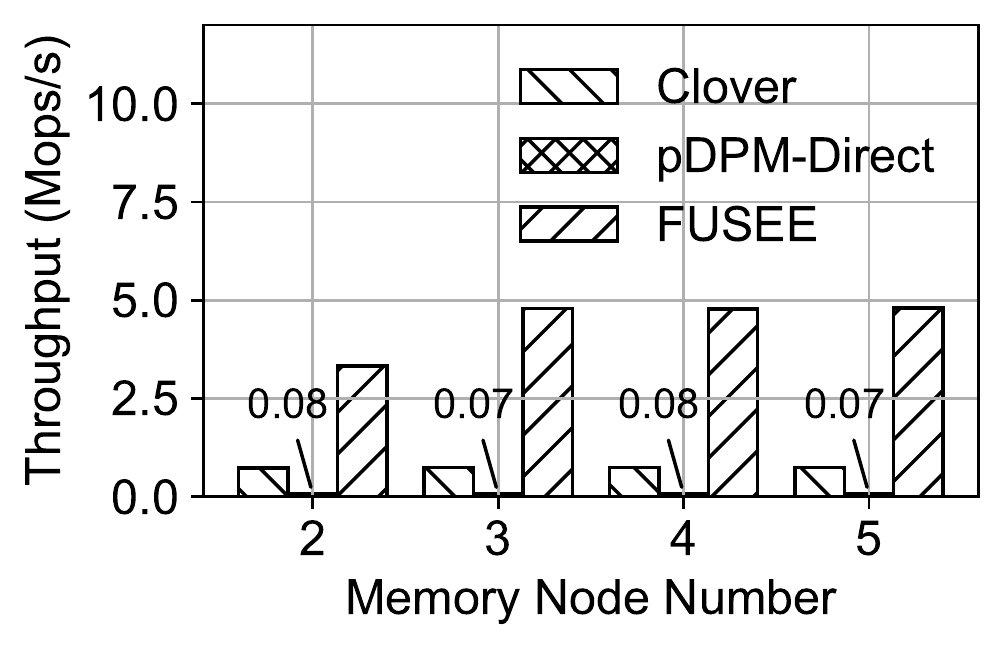}
        \label{fig:mn-ycsb-a-tpt}
    }
    \subfloat[YCSB-C throughput.]{
        \includegraphics[width=0.45\columnwidth]{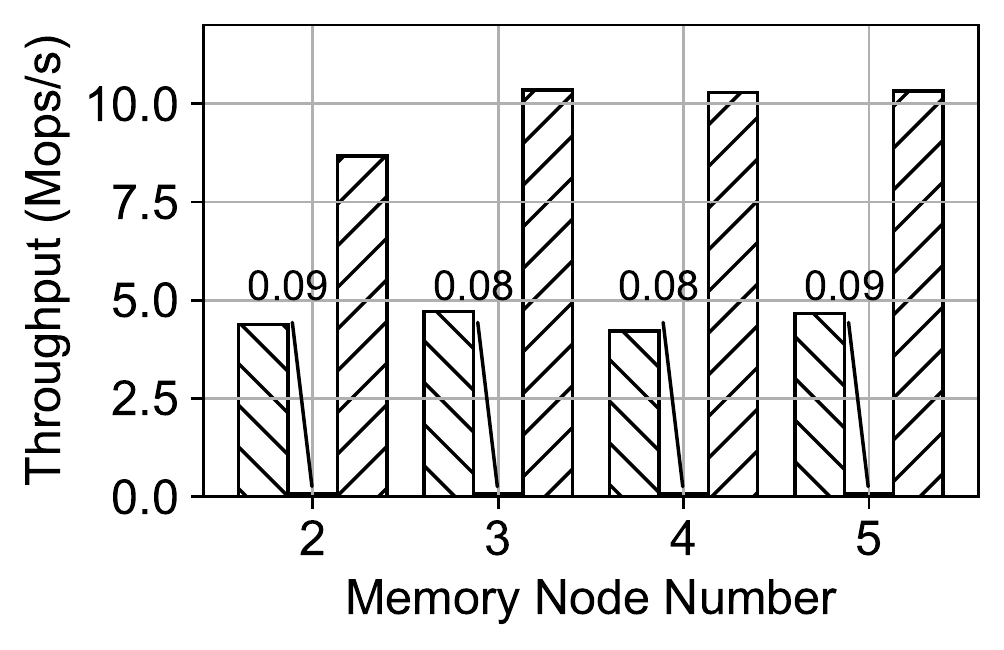}
        \label{fig:mn-ycsb-c-tpt}
    }
    \caption{\small The throughput with different numbers of MNs.}
    \label{fig:mn-tpt}
    \vspace{-0.2in}
\end{figure}

\begin{figure}[t]
    \hspace{2mm}
    \begin{minipage}[t]{.45\columnwidth}
        \centering
        \includegraphics[width=\columnwidth]{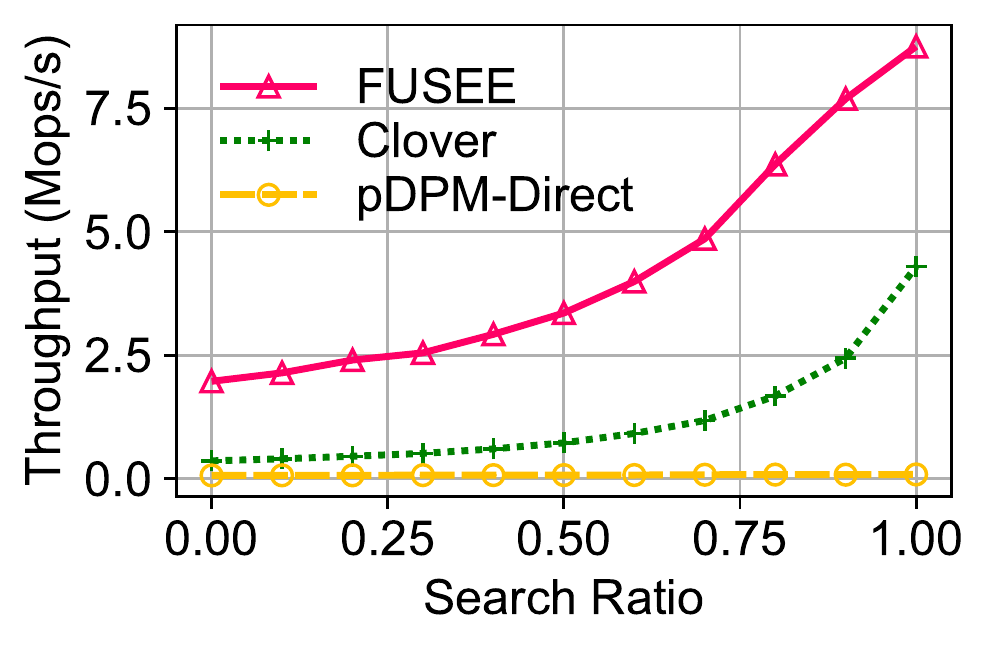}
        \caption{\footnotesize Throughput under different \texttt{SEARCH-UPDATE} ratios.}
        \label{fig:read-write-tpt}
    \end{minipage}%
    \hspace{2mm}
    \begin{minipage}[t]{.45\columnwidth}
        \centering
        \includegraphics[width=\columnwidth]{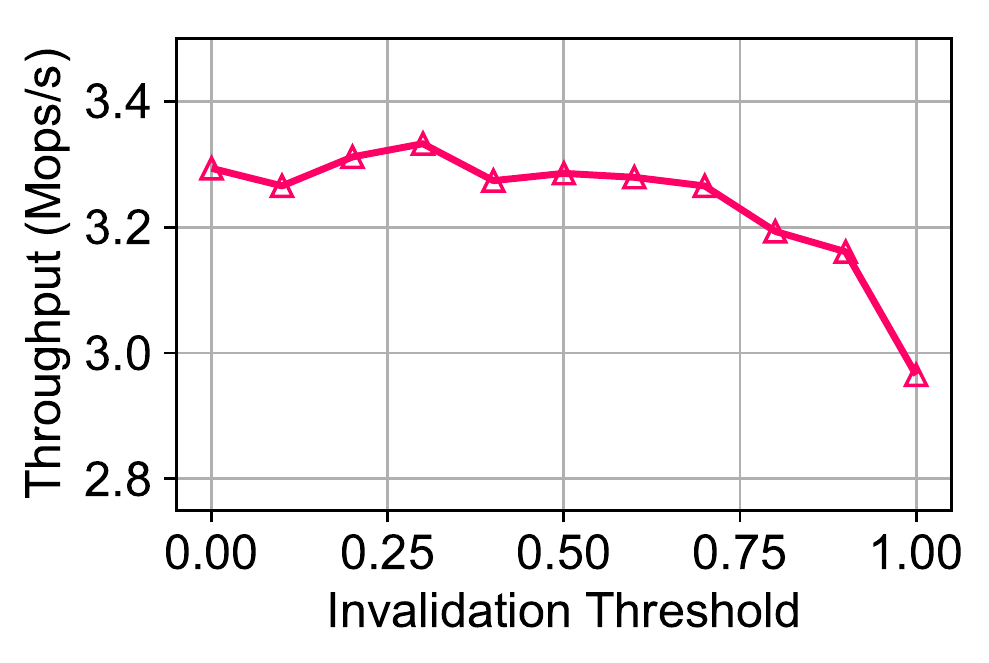}
        \caption{\footnotesize Throughput under different adaptive cache thresholds.}
        \label{fig:cache-miss-tpt}
    \end{minipage}
    \vspace{-0.25in}
\end{figure}

\subsection{YCSB Performance}\label{sec:ycsb}
\noindent
For YCSB benchmarks~\cite{ycsb}, we generate $100,000$ keys with the Zipfian distribution ($\theta=0.99$).
We use $1024$-byte KV pairs, which is representative of real-world workloads~\cite{ycsb,facebookKVBenchmark,rocksDBExperience}.
The hardware setup is the same as microbenchmarks.

\textbf{YCSB Throughput.}
Figure~\ref{fig:ycsb-scalability} shows the throughput of three approaches with different numbers of clients.
Clover performs the best under a small number of clients since adopting the metadata server simplifies KV operations.
Compared with Clover, pDPM-Direct and \kv require more RDMA operations to resolve index modification conflicts.
As the number of clients grows, the throughput of Clover and pDPM-Direct does not increase because the throughput is bottlenecked by the metadata server and the lock contention, respectively.
Compared with Clover, \kv scales better with the growing number of clients while consuming fewer resources.
Compared with pDPM-Direct, \kv improves the throughput by avoiding lock contention.
When the number of clients reaches 128, the throughput of \kv is $4.9\times$ and $117\times$ higher than Clover and pDPM-Direct, respectively.

Figure~\ref{fig:mn-tpt} shows the throughput of the three approaches with a write-intensive workload (YCSB-A) and a read-intensive workload (YCSB-C) when varying numbers of MNs from 2 to 5 using 128 clients. 
The throughput of pDPM-Direct and Clover does not increase due to being limited by lock contention and the limited compute power of the metadata server, respectively.
As for \kv, the throughput improves as the number of memory nodes increases from 2 to 3.
There is no further throughput improvement because the total throughput is limited by the number of compute nodes.

\begin{figure}[t]
    \vspace{-0.03in}
    \begin{minipage}[t]{.46\columnwidth}
        \centering
        \includegraphics[width=0.99\columnwidth]{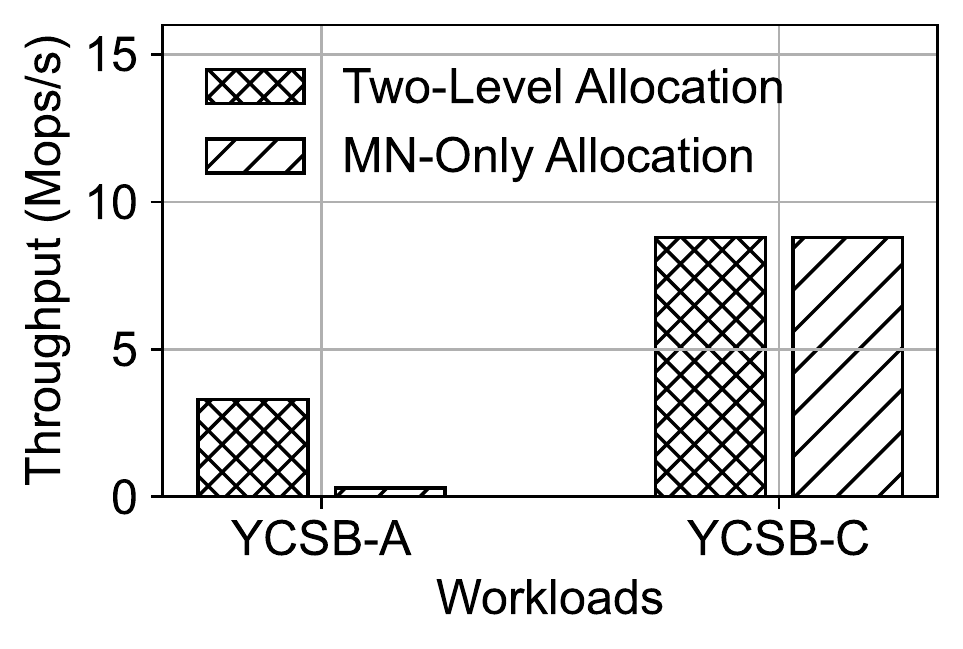}
        \caption{\footnotesize The throughput of different memory allocation methods.}
        \label{fig:mm-tpt}
    \end{minipage}%
    \hspace{2mm}
    \begin{minipage}[t]{.45\columnwidth}
        \centering
        \includegraphics[width=\columnwidth]{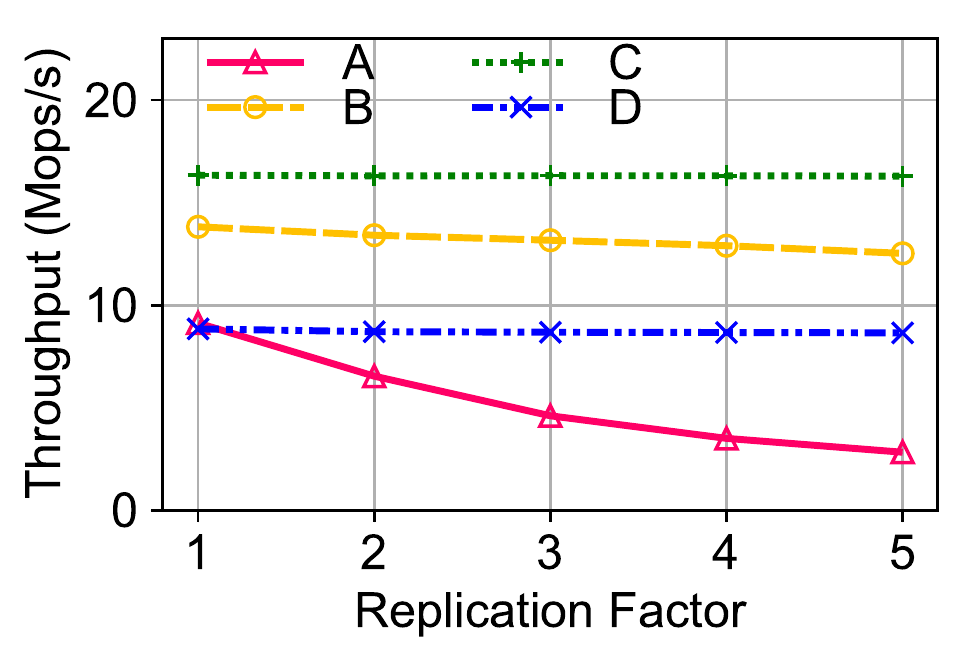}
        \caption{\footnotesize YCSB throughput under different replication factors.}
        \label{fig:rep-tpt}
    \end{minipage}
    \vspace{-0.25in}
\end{figure}

\begin{figure*}
    \centering
    \subfloat[\texttt{UPDATE} median latency.]{
        \includegraphics[width=0.46\columnwidth]{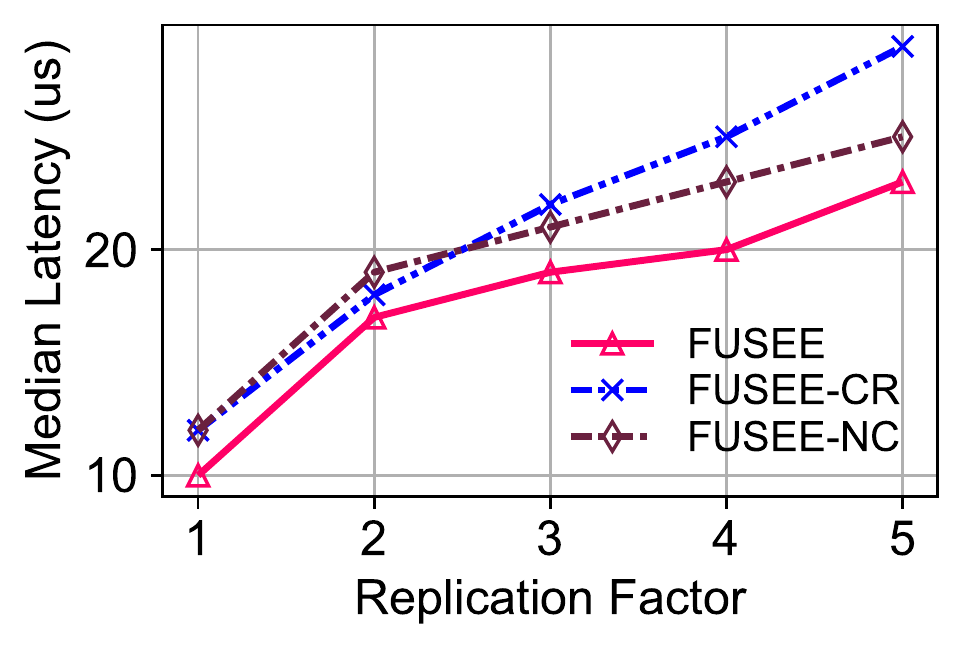}
        \label{fig:rep-update-lat}
    }
    \subfloat[\texttt{DELETE} median latency.]{
        \includegraphics[width=0.46\columnwidth]{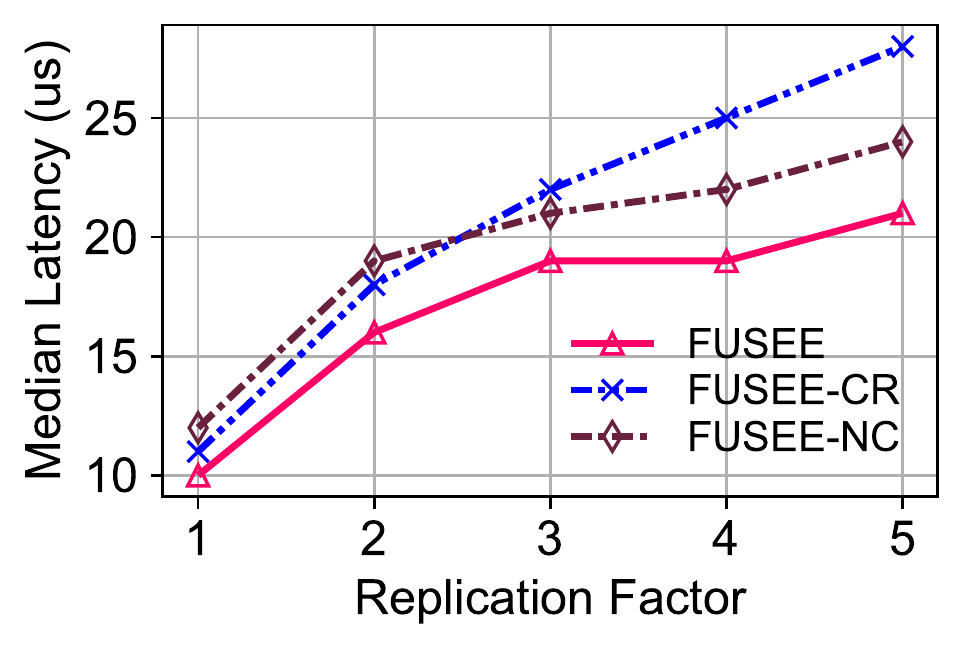}
        \label{fig:rep-delete-lat}
    }
    \subfloat[\texttt{INSERT} median latency.]{
        \includegraphics[width=0.46\columnwidth]{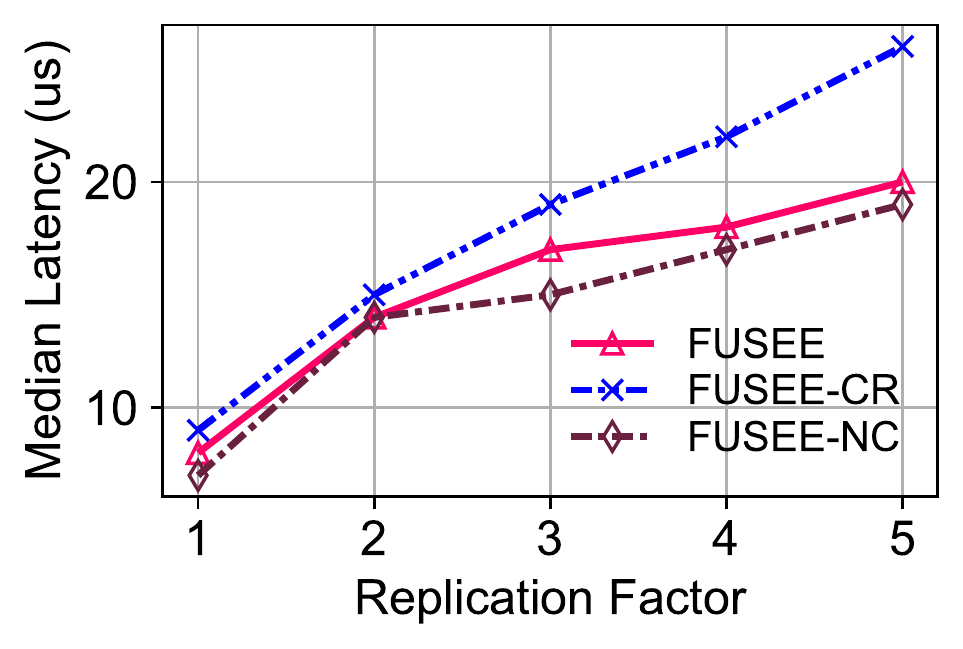}
        \label{fig:rep-insert-lat}
    }
    \subfloat[\texttt{SEARCH} median latency.]{
        \includegraphics[width=0.45\columnwidth]{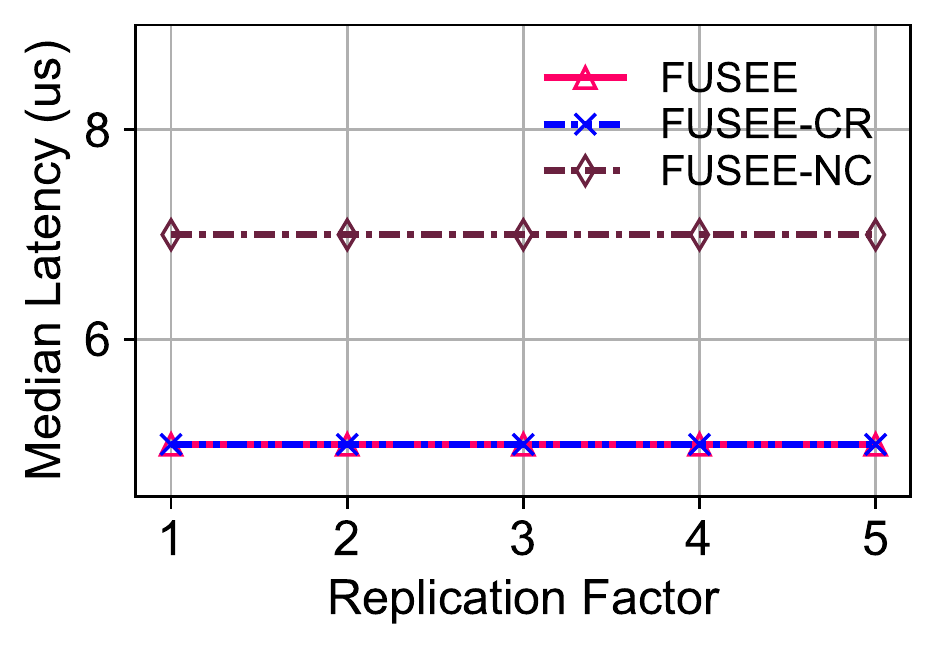}
        \label{fig:rep-search-lat}
    }
    \caption{Median operation latency of \kv, \kv-NC and \kv-CR under different replication factors.}
    \label{fig:rep-lat}
    \vspace{-0.2in}
\end{figure*}

Figure~\ref{fig:kv-size} shows the throughput of \kv under smaller KV sizes.
Since the throughput of \kv is limited by the bandwidth of MN-side RNICs, the YCSB-C throughput of \kv increases by $44.1\%$ and $55.9\%$ with 512B and 256B KV pairs, respectively.
The performance of \kv is not affected by the dataset size because the performance depends only on the number of RTTs of KV requests, which is deterministic as presented in Section~\ref{sec:design}.

\textbf{Read-write performance.}
Figure~\ref{fig:read-write-tpt} shows the throughput of the three approaches under different \texttt{SEARCH}-\texttt{UPDATE} ratios.
As the portion of \texttt{UPDATE} grows, the throughput of all three methods decreases because \texttt{UPDATE} requests involve more RTTs.
However, \kv exhibits the best throughput due to eliminating the computation bottleneck of metadata servers.

\textbf{Adaptive index cache performance.}
Figure~\ref{fig:cache-miss-tpt} shows the YCSB-A throughput of \kv with different adaptive index cache thresholds.
The throughput of \kv decreases with the increasing thresholds because more bandwidth is wasted on fetching invalidated KV pairs with a high threshold.

\textbf{Two-level memory allocation performance.}
To show the effectiveness of the two-level memory allocation scheme, we compare \kv with an MN-centric memory allocation scheme, as shown in Figure~\ref{fig:mm-tpt}.
The YCSB-A throughput drops $90.9\%$ due to the limited compute power on MNs, while the YCSB-C throughput remains the same since no memory allocation is involved in the read-only setting.

\begin{figure}[t]
    \hspace{4mm}
    \begin{minipage}[t]{.47\columnwidth}
        \centering
        \includegraphics[width=0.99\columnwidth]{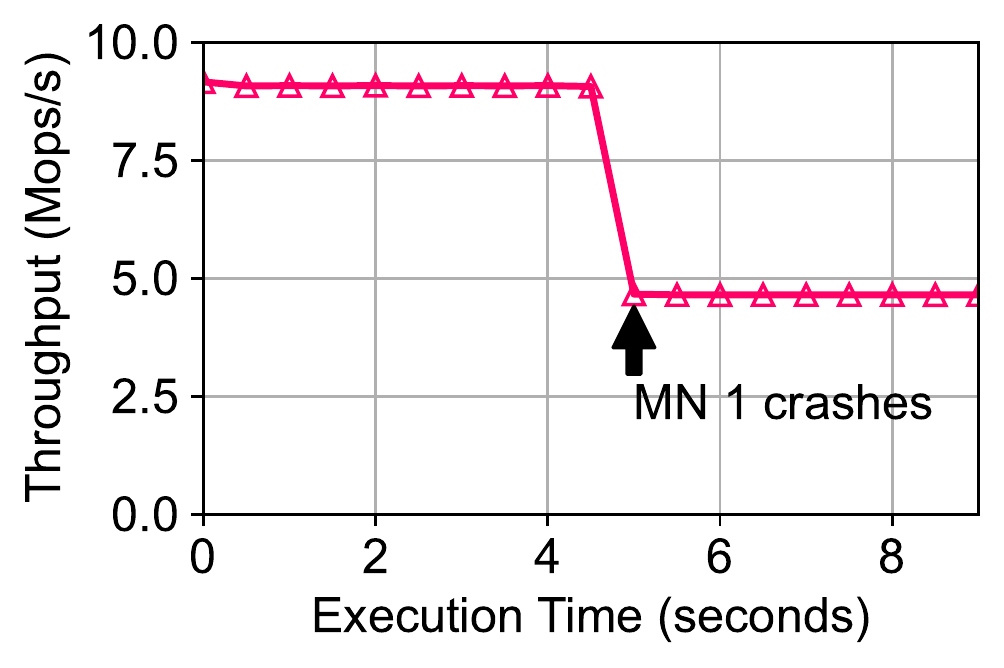}
        \caption{\small YCSB-C throughput under a crashed memory node.}
        \label{fig:crash-tpt}
    \end{minipage}%
    \hspace{2mm}
    \begin{minipage}[t]{.44\columnwidth}
        \centering
        \includegraphics[width=0.99\columnwidth]{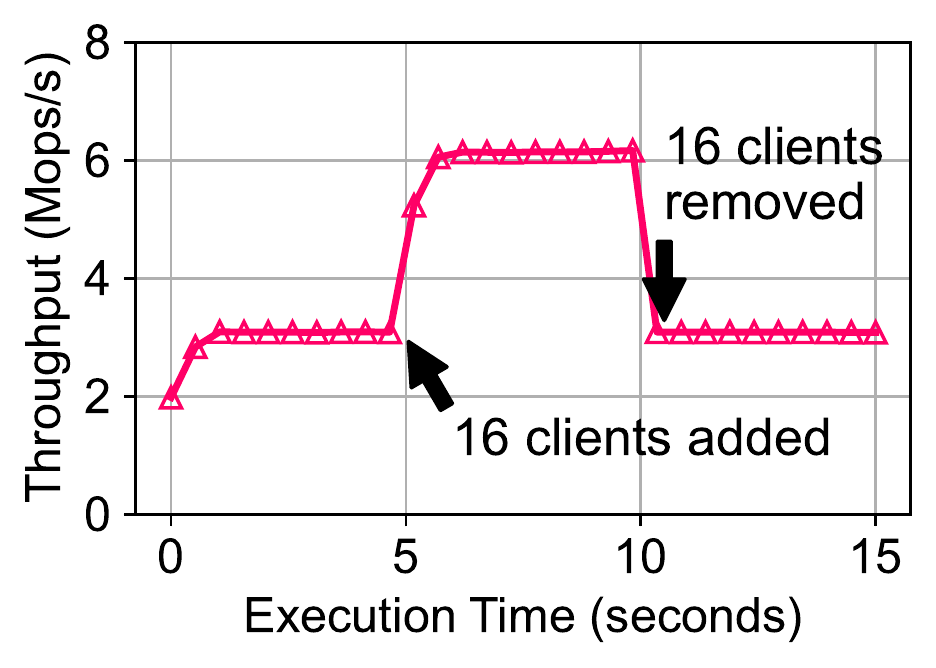}
        \caption{\small The elasticity of \kv.}
        \label{fig:cont-tpt}
    \end{minipage}
    \vspace{-0.15in}
\end{figure}

\subsection{Fault Tolerance \& Elasticity}\label{sec:ft}
\noindent
\textbf{\rep Replication Protocol.}
Figure~\ref{fig:rep-lat} shows the median latency of \kv, \kv-NC, and \kv-CR with different replication factors under microbenchmarks. We set both the numbers of index replicas and data replicas to $r$ where $r$ is the replication factor.
The latency of \kv-CR on \texttt{INSERT}, \texttt{UPDATE}, and \texttt{DELETE} grows linearly as the replication factor because it modifies index replicas sequentially, and the number of RTTs equals the replication factor.
Differently, the latency of \kv grows slightly with the replication factor because \rep has a bounded number of RTTs.
For \texttt{SEARCH} requests, \kv and \kv-CR have comparable latency since they execute \texttt{SEARCH} similarly.
Compared with \kv-NC, \kv has lower latency for \texttt{UPDATE}, \texttt{DELETE}, and \texttt{SEARCH} due to fewer RTTs.
The \texttt{INSERT} latency is slightly higher than that of \kv-NC because \kv spends additional time to maintain the local cache.
Figure~\ref{fig:rep-tpt} shows the throughput of \kv under different replication factors.
For YCSB-A and YCSB-B, the throughput drops as the replication factor grows.
The YCSB-D throughput slightly drops from 8.8 Mops to 8.6 Mops due to the read-intensive nature of YCSB-D.
The YCSB-C throughput remains the same due to no index modifications.

\textbf{Search under Crashed MNs.}
\kv allows \texttt{SEARCH} requests to continue when MNs crash under read-intensive workloads.
Figure~\ref{fig:crash-tpt} shows the throughput of 9 seconds of execution, where memory node 1 crashes at the 5th second.
The overall throughput drops to half of the peak throughput because all data accesses come to one MN.
The throughput is then limited by the network bandwidth of a single RNIC.


\textbf{Recover from Crashed Clients.}
To evaluate the efficiency of a client recovering from failures, we crash and recover a client after \texttt{UPDATE} $1,000$ times.
As shown in Table~\ref{tab:client-rec-bd}, \kv takes 177 milliseconds to recover from a client failure. 
The memory registration and connection re-establishment account for $92\%$ of the total recovery time.
The log traversal and KV request recovery only account for $4\%$ of the recovery time, which implies the affordable overhead of log traversal.

\textbf{Elasticity.}
\kv supports dynamically adding and shrinking clients.
We show the elasticity of \kv by dynamically adding and removing 16 clients when running the YCSB-C workload.
As shown in Figure~\ref{fig:cont-tpt}, the throughput increases when the number of clients increases from 16 to 32 and resumes to the previous level after removing 16 clients.

\begin{table}[t]
    \footnotesize
    \centering
    \caption{Client recovery time breakdown.}
    \label{tab:client-rec-bd}
    \begin{tabular}[width=\columnwidth]{l|r|r}
        \toprule[2pt]
        \textbf{Step} & \textbf{Time (ms)} & \textbf{Percentage} \\
        \midrule[1pt]
        Recover connection \& MR & 163.1 & 92.1\% \\
        Get Metadata           & 0.3   & 0.2\%  \\
        Traverse Log           & 3.5   & 2.0\%  \\
        Recover KV Requests    & 3.5   & 2.0\%  \\
        Construct Free List    & 6.6   & 3.7\%  \\
        \midrule[1pt]
        \textbf{Total}         & 177.0 & 100\%  \\
        \bottomrule[2pt]
    \end{tabular}
    \vspace{-0.1in}
\end{table}
\section{Related Work}\label{sec:liter}
\noindent
\textbf{Disaggregated Memory.}
Existing approaches can be classified into software-based, hardware-based, and co-design-based memory disaggregation.
Software-based approaches hide the DM abstraction by modifying operating systems~\cite{legoos,lite,infiniswap,canfarmem,zombieland}, virtual machine monitors~\cite{syslevelimpl}, or runtimes~\cite{aifm,semeru}.
Hardware-based ones design memory buses~\cite{disaggmem,genz} to enable efficient remote memory access.
Co-design-based approaches co-design software and hardware~\cite{clio,concordia,MIND,rethinksoftruntime} to gain better application throughput and latency on DM.
The design of \kv is agnostic to the low-level implementations of all these DM approaches.

\noindent
\textbf{Disaggregated Memory Management.}
MIND~\cite{MIND} and Clio~\cite{clio} are the two state-of-the-art memory management approaches on DM.
But they both rely on special hardware to manage memory spaces.
The two-level memory management of \kv resembles the hierarchical memory management of The Machine~\cite{theMachine,hotos15themachine}.
The difference is that \kv focuses on fine-grained KV allocation with commodity RNICs, while The Machine relies on special SoCs and directly manages physical memory devices.

\noindent
\textbf{Memory-disaggregated KV stores.}
Clover~\cite{pdpm} and Dinomo~\cite{dinomo} are the most related memory-disaggregated KV stores.
Compared with Clover~\cite{pdpm}, \kv brings disaggregation to metadata management and gains better resource efficiency and scalability.
Finally, Dinomo~\cite{dinomo} is a fully-disaggregated KV store that was developed concurrently with our system. 
Dinomo proposes ownership partitioning to reduce coordination overheads of managing disaggregated metadata.
However, it assumes that the disaggregated memory pool is fault-tolerant, and hence its design does not consider MN failures.
In contrast, \kv addresses the challenges of handling MN failures with the \rep replication protocol.
There are many related RDMA-based KV stores~\cite{fasst,erpc,storm,pdpm,rfp,userdmaefficiently,ramcloud,xstore,drtm,drtmh,farm,pilaf}.
They are infeasible on DM since they rely on server-side CPUs to execute KV requests.
Besides, there are emerging approaches that use SmartNICs to construct KV stores~\cite{kvdirect,floem}.
\kv can also benefit from the additional compute power by offloading the memory management to SmartNICs.

\noindent
\textbf{Replication.}
Both traditional~\cite{cr,craq,primaryBackupRep,paxos,robustEmulation,epaxos,raft,quorum} and RDMA-based~\cite{hermes,mojim,tailwind} replication protocols are designed to ensure data durability.
However, all these approaches are server-centric replication protocols designed for monolithic servers.
Differently, \rep is a client-centric replication protocol designed for the DM architecture and achieves high scalability with collaborative conflict resolution.
\section{Conclusion}\label{sec:conclu}
\noindent
This paper proposes \kv, a fully memory-disaggregated KV store, that achieves both resource efficiency and high performance by disaggregating metadata management.
\kv adopts a client-centric replication protocol, a two-level memory management scheme, and an embedded log scheme to attack the challenges of weak MN-side compute power and complex failure situations on DM.
Experimental results show that \kv outperforms the state-of-the-art approaches by up to $4.5\times$ with less resource consumption.
\section*{Acknowledgments}
\noindent
We sincerely thank our shepherd Kimberly Keeton and the anonymous reviewers for their constructive comments and suggestions. 
This work was supported by the National Natural Science Foundation of China (Nos. 62202511 \& 61971145), the Research Grants Council of the Hong Kong Special Administrative Region, China (No. CUHK 14210920 of the General Research Fund), and Huawei Cloud.
Pengfei Zuo is the corresponding author (pfzuo.cs@gmail.com).

\bibliographystyle{plain}
\bibliography{main}

\appendix
\section{Proof of Correctness}\label{sec:proof}
\noindent
Since \rep focuses on the linearizability of replicated index slots, it is equivalent to proving that \rep is correct on a single replicated slot.
The proof consists of four parts. 
First, we formally define the system model and the correctness standard, {\em i.e.}, linearizability.
We then specify the behavior of a legal replicated slot.
After that, we formally prove that \rep is correct under failure-free executions.
Finally, we illustrate that \rep is correct when failures occur with a fault-tolerant management master.

\subsection{Formal System Model}
\noindent
Formally, the system is comprised of a set $C$ of \textit{clients} $\{c_1, ..., c_n\}$ and a set $S$ of \textit{replicated slots} $\{s_1, ..., s_r\}$, where $n$ is the number of clients and $r$ is the replication factor.
Each slot $s_i$ stores a value denoted by $v_{s_i}$. 
Without loss of generality, we assume $s_1$ is the primary slot and $\{s_2, ..., s_r\}$ are backup slots.
\textit{Clients} can access the contents in the \textit{replicated slots} via $\texttt{RDMA\_READ}(s_i)$, and $\texttt{RDMA\_CAS}(s_i, v_{old}, v_{new})$ operations. 
$\texttt{RDMA\_READ}(s_i)$ returns the content in slot $s_i$. 
$\texttt{RDMA\_CAS}(s_i, v_{old}, v_{new})$ atomically compares $v_{s_i}$ with $v_{old}$ and updates $v_{s_i}' = v_{new}$ if $v_{s_i} = v_{old}$.
All these operations are synchronous and reliable, {\em i.e.}, client-initiated RDMA operations will not be re-ordered and operations reply a \texttt{FAIL} state only if the corresponding slot crashes.

Both \textit{clients} and \textit{slots} may fail according to the \textit{crash-stop model}: a failed client stops executing instructions and a failed slot returns \texttt{FAIL} to clients when performing RDMA operations.
Slots are shared memory that provide only $\texttt{RDMA\_READ}$ and $\texttt{RDMA\_CAS}$ operations for \textit{clients} to manipulate its content.
Clients are state machines that deterministically transition between states when events occur.
Each operation $op$ contains two events, $inv(op)$ indicating the invocation of the operation and $resp(op)$ indicating the completion of the operation.

We use the definitions of \textit{history} from Herlihy and Wing~\cite{DBLP:journals/toplas/HerlihyW90} to facilitate our proof.
A history $h$ of an execution $e$ is an infinite sequence of operation invocation and response events in the same order as they appear in $e$.
We denote by $ops(h)$ the set of all operations whose invocations appear in $h$.

A history $h$ is \textit{sequential} if (1) the first event of $h$ is an invocation and (2) each invocation, except the last is immediately followed by a matching response and each response is immediately followed by an invocation.
We use $h|o$ to denote an object subhistory, where all events in $h$ occurred at object $o$.
A set $S$ of histories is \textit{prefix-closed} if whenever $h$ is in $S$, every prefix of $h$ is also in $S$.
A \textit{sequential specification} of an object $o$ is a prefix-closed set of single-object sequential histories for $o$.
A sequential history $h$ is legal if $\forall o \in O: h|o$ belongs to the sequential specification for $o$.
A history induces an irreflexible partial order on $ops(h)$, denoted as $<_h$, where $op_1 <_h op_2$ if and only if $resp(op_1) < inv(op_2)$ in h.

\subsection{Replicated Slot}
\noindent
A replicated slot is a data type that supports the following operations:
\begin{itemize}
    \item READ(): returns the value of the replicated slot.
    \item WRITE(v): updates the value of the slot to v.
\end{itemize}

\noindent
Different from traditional replicated objects, the values different clients write to a replicated slot are not duplicated, which is guaranteed by the out-of-place modification scheme of \kv.
We use $reads(h)$ and $writes(h)$ to denote the set of all operations that reads or writes in $ops(h)$ respectively.

\begin{definition}[Replicated Slot Specification]\label{def:repslot-spec}
A sequential object history $h|o$ belongs to the sequential specification of a replicated slot if for each $op \in reads(h|o)$ such that $resp(op) \in h|o$, $resp(op)$ contains the value of the latest preceding operation $u \in write(h|o)$ or if there is no preceding update, then $resp(op)$ contains the initial value of $o$.
\end{definition}

\subsection{Proof of Linearizability}
\noindent
In this subsection, we prove that the failure-free executions of \rep satisfy linearizability.
The proof is based on Algorithm~\ref{alg:rep} and Algorithm~\ref{alg:rep-eval}.
Linearizability~\cite{lin} is defined as follows.

\begin{definition}[Linearizability]\label{def:lin}
A complete history $h$ satisfies linearizability if there exists a legal total order $\tau$ of $ops(h)$ such that $\forall op_1, op_2 \in ops(h), op_1 <_h op_2 \implies op_1 <_{\tau} op_2$.
\end{definition}

\textbf{Proof Sketch.}
We prove the linearizability of \rep by first attaching a \textit{virtual label} to the replicated slot.
Then we formally define the rules to update the slot virtual label.
We assign each READ or WRITE operation an operation label according to the slot virtual label.
Finally, we define a total order relationship on the operation virtual label and prove that the label total order satisfies $\tau$ in Definition~\ref{def:lin}.

\textbf{Notations}
Without further specification, we use $var^{op}$ and $func^{op}$ to denote the variable $var$ and the function call $func$ in the execution of the algorithms of $op$.

\textbf{More Definitions.}
We first give more definitions to facilitate our proof.
\begin{definition}
A complete operation $op \in reads(h)$ observes an update $u \in writes(h)$ if the value returned in $resp(op)$ was written by $u$.
\end{definition}

\begin{definition}[Write Winner]
An $op \in writes(h)$ is a write winner, denoted by $Win(op) = True$, if $win^{op} \in \{\texttt{Rule\_1}, \texttt{Rule\_2}, \texttt{Rule\_3}\}$ in Line~10 of Algorithm~\ref{alg:rep}.
\end{definition}

\begin{definition}[Slot Virtual Label]
A virtual label of a replicated slot is of the form $(u, v, w) \in \mathbb{N}^3$. The initial virtual slot label is $(0, 0, 0)$.
\end{definition}

\begin{definition}[Slot Virtual Label Update Rules]\label{def:label-upd}
For a label with value $(u, v, w)$ its update rules are defined as follows:
\begin{itemize}
    \item $(u, v, w) \leftarrow (u, v, w+1)$ for each \texttt{RDMA\_READ} operation in Line~2 of Algorithm~\ref{alg:rep}.
    \item $(u, v, w) \leftarrow (u+1, 0, 0)$ for each \texttt{RDMA\_CAS} that modifies the primary slot in Line~12 and Line~15 of Algorithm~\ref{alg:rep}.
\end{itemize}
\end{definition}

\begin{definition}[Operation Virtual Label]\label{def:ops-label}
For each operation $op \in ops(h)$, its virtual label $L_{op} = (u, v, w) \in \mathbb{N}^3$ is assigned as follows:
\begin{itemize}
    \item If $op \in reads(h)$, $L_{op} = (u, 0, w+1)$, where $(u, v, w)$ is the slot virtual label when it initiates the \texttt{RDMA\_READ} in Line~2 of Algorithm~\ref{alg:rep}.
    \item If $op \in writes(h), Win(op) = True$, then $L_{op} = (u+1, 0, 0)$, where $(u, v, w)$ is the slot virtual label when it initiates the \texttt{RDMA\_READ} in Line~6 of Algorithm~\ref{alg:rep}.
    \item If $op \in writes(h), Win(op) = False$, then $L_{op} = (u, 1, c_i)$, where $(u, v, w)$ is the slot virtual label when it initiates the \texttt{RDMA\_READ} in Line~6 of Algorithm~\ref{alg:rep} and $c_i$ is the client that performs the operation.
\end{itemize}
\end{definition}

\begin{definition}[Operation Label Order]
Two operation virtual labels $L_1 = (u_1, v_1, w_1), L_2 = (u_2, v_2, w_2)$, $L_1 <_l L_2$ if and only if $u_1 < u_2 \bigvee (u_1 = u_2 \bigwedge v_1 < v_2) \bigvee (u_1 = u_2 \bigwedge v_1 = v_2 \bigwedge w_1 < w_2)$.
\end{definition}

\begin{definition}[Write Round]
For an $op \in writes(h)$, its write round $R_{op} = u$, where $u$ is the label $(u, v, w)$ when Line~6 of Algorithm~\ref{alg:rep} is executed.
\end{definition}

\begin{definition}[Round Start Time]\label{def:r-st}
The start time of a write round $k$ is defined as $t_{start}^k = inv(op): \forall op_j \in writes(h): R(op) = R(op_j) \implies inv(op) \leq inv(op_j)$.
\end{definition}

\begin{lemma}\label{lemma:1}
Values in the primary slot is uniquely associated with a write round.
\end{lemma}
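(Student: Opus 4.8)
The plan is to establish Lemma~\ref{lemma:1} by an induction on write rounds, using the atomicity of \texttt{RDMA\_CAS} and the out-of-place modification property to argue that each round can deposit at most one (and eventually exactly one) new value into the primary slot. First I would fix notation: recall that $R_{op} = u$ records the slot label component when Line~6 of Algorithm~\ref{alg:rep} is executed by $op$, so all writers sharing a round read the \emph{same} primary value $v_{old}$ in Line~8 (this needs a short argument: the slot label's first component only advances when the primary slot is modified, per Definition~\ref{def:label-upd}, so as long as the primary slot holds the value associated with round $k$, every writer that executes Line~6 during that window gets round number $k$). I would then define, for round $k$, the value $v^{(k)}$ to be the value held by the primary slot throughout round $k$'s lifetime; the claim is that $v^{(k)}$ is well-defined and that the primary slot takes value $v^{(k)}$ \emph{only} during round $k$ — i.e.\ values never repeat across rounds, which is exactly where out-of-place modification is used (each \texttt{WRITE}'s $v_{new}$ is a distinct pointer, so $v^{(k)} \neq v^{(k')}$ for $k \neq k'$).

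The core of the proof is the induction step: assuming the primary slot holds $v^{(k)}$ at the start of round $k$, I would show that the \emph{only} value it can next hold is determined by a single writer. Every writer in round $k$ issues \texttt{RDMA\_CAS}es on the backup slots with the common expected value $v_{old} = v^{(k)}$ and its own distinct $v_{new}$; by atomicity of \texttt{RDMA\_CAS}, each backup slot is successfully changed by exactly one writer, so after all round-$k$ CASes land, the multiset of backup-slot values is fixed. The conflict-resolution rules of Algorithm~\ref{alg:rep-eval} (Rule 1 = unanimous, Rule 2 = strict majority, Rule 3 = minimal $v_{new}$ among those present, guarded by the primary-slot recheck on Line~12) are deterministic functions of that fixed multiset, so at most one writer evaluates to a winning rule, and that writer alone performs \texttt{RDMA\_CAS\_primary}$(slot, v_{old}, v_{new})$ (Lines~12 and~15 of Algorithm~\ref{alg:rep}). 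Since every such CAS uses expected value $v^{(k)}$, only the first to arrive succeeds, and by the argument above there is only one such writer, so the primary slot transitions exactly once, to that writer's $v_{new}$, which we name $v^{(k+1)}$; by Definition~\ref{def:label-upd} this CAS bumps the label to $(u+1,0,0)$, opening round $k+1$. Non-winning writers never touch the primary slot (they loop on Lines~17--22 reading it), so no other modification occurs. Hence the map round $\mapsto$ primary value is well-defined and, combined with the no-repeats claim, injective — every primary value is uniquely associated with one write round.

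The main obstacle I anticipate is making precise the claim that ``all writers in round $k$ read the same $v_{old}$,'' since in principle a writer could execute Line~8 at different real-time moments; I would close this by observing that between the beginning and end of round $k$ the primary slot is immutable (the winning CAS of round $k$ is by definition the event that ends round $k$), so any \texttt{RDMA\_READ\_primary} in Line~8 during that interval returns $v^{(k)}$, and conversely a writer reading $v^{(k)}$ in Line~8 must be in round $k$ because values are not reused. A secondary subtlety is Rule~3's well-definedness: two distinct writers could a priori both believe they wrote the minimal $v_{new}$, but this cannot happen because $v\_list$ (after \texttt{change\_list\_value}) is the \emph{same} fixed multiset for every round-$k$ writer once all CASes have completed, and the Line~12 recheck ensures a writer only commits under Rule~3 if the primary is still $v_{old}$ — so the recheck serializes the Rule~3 decision against the eventual primary modification. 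I would note that this lemma is the analog, in our client-centric setting, of the ``one decision per slot/round'' invariant in consensus-style arguments, and that it underpins the subsequent assignment of operation virtual labels in Definition~\ref{def:ops-label}, since the round number $u$ is precisely what indexes a writer's winning label $(u+1,0,0)$.
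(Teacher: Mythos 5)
Your proposal reaches the right conclusion, but it routes through far more machinery than the lemma needs, and in doing so it inverts the paper's dependency structure. The paper's proof of Lemma~\ref{lemma:1} is a two-line observation: by Definition~\ref{def:label-upd}, the slot label's first component $u$ is incremented exactly when an \texttt{RDMA\_CAS} in Line~12 or Line~15 of Algorithm~\ref{alg:rep} successfully modifies the primary slot, and by out-of-place modification every such modification installs a value never written before; hence each value the primary slot ever holds corresponds to exactly one $u$, i.e.\ to one write round. Nothing about winners, backup multisets, or Rule~3 is required --- the atomicity of \texttt{CAS} on the primary slot (only a \texttt{CAS} whose expected value matches can succeed) plus distinctness of values already gives the unique association. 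Your induction instead bakes in the claim that ``at most one writer evaluates to a winning rule,'' which is precisely the content of the paper's Lemma~\ref{lemma:<=1winner} --- and that lemma, together with Lemma~\ref{lemma:>=1winner}, is proved in the paper \emph{using} Lemma~\ref{lemma:1} (``by Lemma~\ref{lemma:1}, $\forall op: R(op)=k$, $v_{old}^{op}=v_{s_1}$''). So as written, your argument either becomes circular when slotted into the paper's lemma ordering, or it must be made self-contained, in which case you are re-proving Lemmas~\ref{lemma:>=1winner}--\ref{lemma:consistent-round-init} inside Lemma~\ref{lemma:1}.

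If you do intend the single-winner sub-argument to be self-contained, it has a gap: you justify uniqueness of the winner by saying the rules are ``deterministic functions of that fixed multiset,'' but each writer evaluates the rules relative to its \emph{own} $v_{new}$, and the Rule~3 path additionally depends on the timing of the primary-slot recheck in Line~12 of Algorithm~\ref{alg:rep-eval}, so determinism of the multiset alone does not preclude two writers both concluding they won (one by Rule~2, one by Rule~3). The paper closes this with the explicit case analysis in Lemma~\ref{lemma:<=1winner} (non-overlapping majorities, and the happens-before argument between \texttt{RDMA\_CAS\_backups} and \texttt{RDMA\_CAS\_primary} for the Rule~3 case). My recommendation: strip your proof down to the label-increment-plus-distinct-values observation, which is all this lemma asserts, and leave the winner uniqueness to the later lemmas where it belongs.
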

\begin{proof}
By Definition~\ref{def:label-upd} and Line~12 and Line~15 of Algorithm~\ref{alg:rep}. The primary slot is modified with different values and with a strictly monotonic $u$, where $(u, v, w)$ is its virtual label.
\end{proof}

\begin{lemma}[At least one winner]\label{lemma:>=1winner}
$\forall$ write round $k$ if $v_{s_1} = ... = v_{s_r}$ at $t_{start}^k$, then $\exists op \in writes(h): R(op) = k \bigwedge win(op) = True$.
\end{lemma}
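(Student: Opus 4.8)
The plan is to argue by contradiction, inside the failure-free model of this subsection (so no \texttt{FAIL} ever occurs, the master never intervenes, and every invoked operation runs to completion): assume that no $op\in writes(h)$ with $R(op)=k$ is a write winner, and derive a contradiction. Let $W_k=\{op\in writes(h):R(op)=k\}$, which is non-empty because $t_{start}^k$ is defined. By Lemma~\ref{lemma:1} the primary slot $s_1$ carries one well-defined value $v_{old}$ throughout round $k$, and every $op\in W_k$ reads exactly $v_{old}$ at Line~6 of Algorithm~\ref{alg:rep}.

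First I would show that, under the contradiction hypothesis, the primary is never advanced and the backups ``freeze''. Inspecting Algorithm~\ref{alg:rep}, the primary slot is written only at Lines~12 and~15, and the backups are re-written only at Line~14 --- all three happening only when the acting client has $win\in\{\texttt{Rule\_1},\texttt{Rule\_2},\texttt{Rule\_3}\}$, i.e.\ is a winner. With no round-$k$ winner, $s_1$ keeps $v_{old}$ for all of round $k$, so round $k$ never ends, and each backup is touched only by the Line~7 \texttt{RDMA\_CAS\_backups} broadcasts, each carrying expected value $v_{old}$ and a pairwise-distinct swap value $v_{new}^{op}$ (the out-of-place-modification property). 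Since all slots equal $v_{old}$ at $t_{start}^k$ and, by Lemma~\ref{lemma:1}, no \texttt{CAS} carrying an expected value from an adjacent round can succeed, each backup $s_i$ is modified exactly once during round $k$, from $v_{old}$ to $v_{new}^{op^{*}_i}$ where $op^{*}_i\in W_k$ is the writer whose broadcast reaches $s_i$ first; and once \emph{any} round-$k$ writer has completed its Line~7 broadcast no backup still equals $v_{old}$, so the backup state is permanently fixed at the tuple $T=(v_{new}^{op^{*}_2},\dots,v_{new}^{op^{*}_r})$.

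Next I would show every $op\in W_k$ computes $v\_list=T$ after \texttt{change\_list\_value}: for each $s_i$, if $op=op^{*}_i$ its broadcast \texttt{CAS} succeeded and returned $v_{old}$, which \texttt{change\_list\_value} rewrites to $v_{new}^{op}=T[i]$; if $op\neq op^{*}_i$ its \texttt{CAS} reached $s_i$ after $op^{*}_i$'s, hence failed and returned the already-frozen $v_{new}^{op^{*}_i}=T[i]$, which is left untouched. With $v\_list=T$ in hand I would run the rule-evaluation procedure (Algorithm~\ref{alg:rep-eval}) on a distinguished writer $op^{\star}\in W_k$: if $T$ is constant, take $op^{\star}$ with $v_{new}^{op^{\star}}$ equal to that value and observe it returns \texttt{Rule 1}; if $T$ has a strict majority value $v^{*}$ but is not constant, take $op^{\star}$ with $v_{new}^{op^{\star}}=v^{*}$ and observe it returns \texttt{Rule 2}; otherwise (no strict majority) take $op^{\star}$ with $v_{new}^{op^{\star}}=\min(T)$: since $\min(T)\in T=v\_list$ the ``$v_{new}\notin v\_list$'' shortcut is skipped, the primary re-read returns $v_{old}$ (the primary was never advanced), so the ``$v_{check}\neq v_{old}$'' branch is not taken, and the final test succeeds because $\min(v\_list)=\min(T)=v_{new}^{op^{\star}}$, yielding \texttt{Rule 3}. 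In every case $op^{\star}\in W_k$ has $Win(op^{\star})=True$, contradicting the hypothesis; hence a round-$k$ winner exists. Existence and membership of $op^{\star}$ in $W_k$ are immediate: every entry of $T$ equals $v_{new}^{op^{*}_i}$ for some $op^{*}_i\in W_k$, and the $v_{new}$'s are distinct, so each prescription names a unique writer.

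The main obstacle I anticipate is making the ``freezing'' argument airtight --- that, absent a winner, the backups settle to a single tuple $T$ and that \emph{every} round-$k$ writer's post-\texttt{change\_list\_value} list equals this same $T$. This leans on the atomicity of \texttt{RDMA\_CAS}, the distinctness of the swap values, the fact that re-writes of any slot (Lines~12--15 of Algorithm~\ref{alg:rep}) are the exclusive privilege of winners, and the hypothesis ``$v_{s_1}=\dots=v_{s_r}$ at $t_{start}^k$'' together with Lemma~\ref{lemma:1} to rule out interference from stragglers of neighbouring rounds; I would also need to be careful that an $op\in W_k$ which performs its Line~6 read and its Line~7 broadcast at widely separated times still sees $v\_list=T$, which holds because a backup never returns to $v_{old}$ once it has been \texttt{CAS}ed. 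Once $T$ is pinned down, the rule evaluation in the last step is a routine three-way case split.
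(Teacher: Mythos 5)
Your proof is correct and follows essentially the same route as the paper's: argue by contradiction, use the atomicity of \texttt{RDMA\_CAS} with distinct swap values to show each backup is modified exactly once and then freezes, observe that without a winner the primary keeps $v_{old}$ so every surviving writer passes the $v_{check}=v_{old}$ test, and conclude that the writer of $\min(v\_list)$ (which must be some round-$k$ writer's $v_{new}$) would win by \textbf{Rule 3}. Your version merely spells out more explicitly that all writers compute the same frozen $v\_list$ and that the unanimous/majority configurations already yield a \textbf{Rule 1}/\textbf{Rule 2} winner, details the paper's proof leaves implicit.
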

\begin{proof}
Since $v_{s_1} = ... = v_{s_r}$, by Lemma~\ref{lemma:1} $\forall op \in writes(h): R(op) = k, v_{old}^{op} = v_{s_1}$.
Since values in the backup slots all equals $v_{s_1}$ and different $op$ proposes different $v_{new}^{op}$, the atomicity of \texttt{RDMA\_CAS} (Line~7, Algorithm~\ref{alg:rep}) ensures that each backup can only be modified once and all backups must have be modified once after some $op$ reaches Line~9.
Suppose $\forall op \in writes(h): R(op) = k \bigwedge Win(op) = False$.
The primary slot will not be modified as indicated by the if condition in Line~16 of Algorithm~\ref{alg:rep}, which implies that all $op$ has $v_{check}^{op} = v_{old}$ in Line~12 of Algorithm~\ref{alg:rep-eval}.
The if condition of Line~17 (Algorithm~\ref{alg:rep-eval}) must be false, and thus $\forall op: v_{new}^{op} \neq min(v\_list)$, which is impossible because all the backup slot must have be modified by some $op$ and $v_{new}^{op}$ are distinct.
\end{proof}

\begin{lemma}[At most one winner]\label{lemma:<=1winner}
$\forall$ write round $k$ if $v_{s_1} = ... = v_{s_r}$ at $t_{start}^k$, then $\forall op_1, op_2 \in writes(h): R(op_1) = R(op_2) = k \bigwedge Win(op_1) = Win(op_2) = True \implies op_1 = op_2$.
\end{lemma}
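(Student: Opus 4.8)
The plan is to argue by contradiction: suppose two distinct operations $op_1, op_2 \in writes(h)$ with $R(op_1) = R(op_2) = k$ both have $Win = True$, under the hypothesis $v_{s_1} = \cdots = v_{s_r}$ at $t_{start}^k$. Since this hypothesis holds, Lemma~\ref{lemma:1} tells us every $op$ in round $k$ reads the same $v_{old}$ in Line~6, and each issues \texttt{RDMA\_CAS} on the backups with a \emph{distinct} $v_{new}$ (out-of-place modification). By atomicity of \texttt{RDMA\_CAS}, each backup slot is written exactly once; hence the final multiset of backup values is a fixed partition of the $r-1$ backups among the round-$k$ writers, and this partition is what every writer eventually observes in its $v\_list$ (after the \texttt{change\_list\_value} step folds in each writer's own successful CAS). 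I would first record this as the central structural fact: all round-$k$ writers agree on the same final $v\_list$ multiset — call it $M$ — once the backups have stabilized.

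Next I would do a case analysis on which rule each winner fired, matching the cascade in Algorithm~\ref{alg:rep-eval}. For \textbf{Rule 1}, the winner must have $cnt_{maj} = \texttt{Len}(v\_list)$ with $v_{maj} = v_{new}$, i.e. it owns \emph{all} $r-1$ backups; clearly at most one writer can own all backups, so two Rule-1 winners are impossible, and a Rule-1 winner excludes any Rule-2 or Rule-3 winner because nobody else holds a majority (or even one) backup. For \textbf{Rule 2}, a winner owns a strict majority of backups and $v_{maj} = v_{new}$; two distinct writers cannot each own a strict majority of the same fixed set, so at most one Rule-2 winner exists, and a Rule-2 winner again precludes Rule 1 (it doesn't own all backups only if... actually a Rule-1 winner would also be seen as majority, so I must be careful: I would note the rules are evaluated \emph{sequentially}, so a writer returns Rule~2 only when the Rule~1 test failed, which for a fixed $M$ is a global condition — if some writer owns all backups, \emph{every} writer sees $cnt_{maj} = \texttt{Len}$, so no writer can fall through to the Rule~2 branch). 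For \textbf{Rule 3}, the winner has no majority anywhere and satisfies $min(v\_list) = v_{new}$ together with the primary-slot check $v_{check} = v_{old}$; since $M$ is fixed, $min(M)$ is a single value owned by a single writer, so at most one Rule-3 winner, and Rule 3 coexisting with Rule 2 is ruled out because Rule 2's majority condition is, again, a global property of $M$: if $2\,cnt_{maj} > \texttt{Len}$ holds for the majority owner, then when any \emph{other} writer evaluates the cascade it will see that same $cnt_{maj}$ for $v_{maj} \neq v_{new}^{other}$ and return \texttt{LOSE} at the Rule~2 line, never reaching Rule~3.

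The delicate step — and I expect this to be the main obstacle — is Rule~3's dependence on the dynamic \texttt{RDMA\_READ} of the primary slot in Line~12 of Algorithm~\ref{alg:rep-eval}, rather than purely on the static multiset $M$. Two distinct writers could in principle both have $min(v\_list) = v_{new}$ if their $v\_list$s were observed at different times before stabilization. I would close this by showing that a Rule-3 winner's $v\_list$ must already be the stabilized $M$: the $v_{check} \neq v_{old}$ branch (returning \texttt{FINISH}) would divert any writer whose round has a committed primary value, so a Rule-3 winner sees $v_{check} = v_{old}$, meaning no primary modification has happened yet; combined with the out-of-place/atomicity argument, reaching the $min(v\_list) = v_{new}$ test forces $v_{new} \in v\_list$ and, since $v_{new}$ was this writer's own proposal and each value lands in at most one backup, $v_{new} = min(M)$ exactly. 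As $min(M)$ is unique and owned by a unique writer, two Rule-3 winners collapse to one. Assembling the three cases and their pairwise incompatibilities (each hinging on ``the rule the winner fired is a global function of the fixed multiset $M$'') yields $op_1 = op_2$, completing the proof. I'd also note explicitly that the \texttt{FAIL} branch is vacuous here since we are in the $v_{s_1} = \cdots = v_{s_r}$, failure-free regime of this lemma.
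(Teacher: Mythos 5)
Your argument is correct in substance and reaches the same conclusion, but it is organized around a different invariant than the paper's proof. The paper argues by direct case analysis on the pair of winning rules: Rule\,1/Rule\,2 winners are unique because backup ownership sets are disjoint and majorities cannot overlap; a Rule\,3 winner cannot coexist with a Rule\,1/Rule\,2 winner via a \emph{timing} split (either $op_2$'s \texttt{RDMA\_CAS\_backups} completes before $op_1$'s \texttt{RDMA\_CAS\_primary}, in which case $op_2$ observes the majority and loses at the Rule\,2 test, or it completes after, in which case the primary read at Line~12 of Algorithm~\ref{alg:rep-eval} returns \texttt{FINISH}); and two Rule\,3 winners are excluded because both would have to equal $\min(v\_list)$. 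You instead hoist a single structural invariant --- all round-$k$ writers observe the same stabilized multiset $M$, so each rule's outcome is a global function of $M$ --- and derive all pairwise exclusions from it, which is cleaner and makes the ``snapshot'' intuition explicit. One caveat: that invariant is stated too strongly. A writer that wins by Rule\,2 or Rule\,3 re-CASes the backups at Line~14 of Algorithm~\ref{alg:rep} \emph{before} touching the primary, so a slow writer whose Line-7 broadcast reaches a backup after that overwrite observes the winner's value where the first-round owner's value used to be; different writers therefore need not see the identical multiset. What is actually true --- and suffices for every one of your case exclusions --- is (i) the sets of backups each writer successfully CASes are pairwise disjoint, since each backup leaves $v_{old}$ exactly once and never returns to it within the round, and (ii) for any backup a writer does not own, it observes either the first-round owner's value or the eventual winner's value. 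With (i) and (ii) your majority and minimum arguments go through unchanged (the Line-14 overwrites can only inflate the count a loser sees for the winner's value), so the proof is repairable without altering its structure; your treatment of the delicate Rule\,3 interaction with the dynamic primary read plays exactly the role the paper assigns to its \texttt{FINISH} branch.
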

\begin{proof}
1. No two client can both win by \texttt{Rule 1} and \texttt{Rule 2} because (1) all backup slots can only be modified once and (2) majority cannot be overlapped because different $op$ modifies with different $v_{new}$.

2. Suppose there are two winners $op_1$ and $op_2$.

\noindent
\textbf{CASE 1}: $win^{op_1} \in \{\texttt{Rule\_1}, \texttt{Rule\_2}\} \bigwedge win^{op_2} = \texttt{Rule\_3}$.

\noindent
If the finish of \texttt{RDMA\_CAS\_backups} (Line~7, Algorithm~\ref{alg:rep}) of $op_2$ happens before $op_1$ executes \texttt{RDMA\_CAS\_primary} (Line~12 or Line~15, Algorithm~\ref{alg:rep}), then guaranteed by the atomicity of \texttt{RDMA\_CAS}, $op_2$ knows the majority and cannot win.
If the finish of \texttt{RDMA\_CAS\_backups} of $op_2$ happens after $op_1$ executes \texttt{RDMA\_CAS\_primary}, then the \texttt{RDMA\_READ} (Line~12, Algorithm~\ref{alg:rep-eval}) must return $v_{check}^{op_2} \neq v_{orig}^{op_2}$. Then $win^{op_2} = \texttt{FINISH}$ by Line~16 of Algorithm~\ref{alg:rep-eval}.

\noindent
\textbf{CASE 2}: $win^{op_1} = win^{op_2} = \texttt{Rule 3}$.

\noindent
Since there is no majority, Line~17 of Algorithm~\ref{alg:rep-eval} indicates that $v_{new}^{op_1} = v_{new}^{op_2}$, contradicting with the assumption that no clients write duplicated value.
\end{proof}

\begin{definition}[Round finish time]
The round finish time $t_{fini}^k$ of a write round $k$ is the time when its only winner initiates \texttt{RDMA\_CAS\_primary} in Line~12 or Line~15 of Algorithm~\ref{alg:rep}.
\end{definition}

\begin{lemma}\label{lemma:consistent-round-init}
$\forall k$ at $t_{start}^k$, we have $v_{s_1} = ... = v_{s_r}$.
\end{lemma}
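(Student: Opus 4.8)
The plan is to prove the lemma by strong induction on the round index $k$, strengthening the claim so that it names the common value: at time $t_{start}^k$ every slot $s_i$ holds $V_{k-1}$, where $V_0$ is the common initial value of the slots and, for $k \geq 1$, $V_k$ denotes the value that the unique winner of round $k$ installs in the primary slot. This strengthened form clearly implies the stated one. The base case — $k$ equal to the first round — follows from the initialization assumption that all slots start at $V_0$, together with the observation that only \texttt{WRITE} operations modify a slot: since $t_{start}^k$ is the earliest invocation of any round-$k$ writer and round $k$ is the minimal round, the first modification of any slot (a first-phase backup \texttt{RDMA\_CAS} issued by some round-$k$ writer) can only occur after that writer's invocation, hence not strictly before $t_{start}^k$, so all slots still hold $V_0$ at $t_{start}^k$.

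For the inductive step I would assume that every slot holds $V_{k-1}$ at $t_{start}^k$. Because the configuration is uniform at the start of the round, Lemma~\ref{lemma:>=1winner} and Lemma~\ref{lemma:<=1winner} apply and yield a unique winner $op^{\star}$ of round $k$; set $V_k := v_{new}^{op^{\star}}$. The heart of the argument is then a pair of facts about round $k$: (i) every backup slot is modified at most once, and only by a round-$k$ writer, for as long as the primary still holds $V_{k-1}$; and (ii) $op^{\star}$ eventually drives every backup slot, and then the primary, to $V_k$. Fact (i) follows from the atomicity of \texttt{RDMA\_CAS} together with the out-of-place property that distinct writers propose distinct values: every round-$k$ writer issues its backup \texttt{RDMA\_CAS} with expected value $V_{k-1}$, so the first such \texttt{RDMA\_CAS} to reach a given backup fixes its value, every later \texttt{RDMA\_CAS} with expected value $V_{k-1}$ is a no-op, every \texttt{RDMA\_CAS} from a writer of a round $< k$ carries a stale expected value and is also a no-op, and no round-$(k+1)$ writer can yet have touched a backup because its expected value $V_k$ has not appeared anywhere. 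Fact (ii) follows because $op^{\star}$, either vacuously under Rule~1 or via Line~15 of Algorithm~\ref{alg:rep} under Rule~2 or Rule~3, \texttt{RDMA\_CAS}es each backup from the value it recorded in $v\_list$ to $V_k$ and then \texttt{RDMA\_CAS}es the primary from $V_{k-1}$ to $V_k$; once the primary holds $V_k$, Lemma~\ref{lemma:1} places us in round $k+1$. Since every round-$(k+1)$ writer only reads $V_k$ from the primary strictly after this point, and $t_{start}^{k+1}$ — the earliest invocation of a round-$(k+1)$ writer — is no later than that read, all slots already hold $V_k$ at $t_{start}^{k+1}$, which closes the induction.

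The step I expect to be the main obstacle is fact (i): ruling out every way a concurrently executing operation could leave a backup in a value other than $V_{k-1}$ or $V_k$ while round $k$ is in progress, in particular handling the window between a round-$k$ writer's first-phase backup \texttt{RDMA\_CAS} and the winner's second-phase reconciliation \texttt{RDMA\_CAS}, during which the backup values are already "fixed" in the sense of the proof of Lemma~\ref{lemma:>=1winner} but not yet mutually equal. Rather than reasoning about real-time instants, I would recast this as a small bundle of invariants maintained by every execution step — "each backup holds $V_{k-1}$ or the proposed value of some round-$k$ writer", "at most one round-$k$ writer has $Win(\cdot) = True$", and "the primary is modified only by that writer and only to $V_k$" — and prove them mutually by induction on the execution, which is precisely the style of the TLA$^+$ model that the paper reports checking.
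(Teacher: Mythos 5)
Your proposal is correct and takes essentially the same route as the paper's own proof: induction on the round index, invoking Lemmas~\ref{lemma:>=1winner} and~\ref{lemma:<=1winner} under the uniform-start hypothesis to obtain a unique winner, and observing that this winner reconciles every backup to $v_{new}^{op}$ (already done by Line~7 under Rule~1, or by the Line~14 \texttt{RDMA\_CAS\_backups} under Rules~2/3) before it \texttt{CAS}es the primary, so that all replicas agree once the round finishes. The one loose step — shared with the paper's write-up — is the closing inference about $t_{start}^{k+1}$: you only establish $t_{start}^{k+1} \leq$ (a round-$(k+1)$ writer's primary read), which points the wrong way for concluding the slots are already uniform at $t_{start}^{k+1}$; the invariant-style reformulation you sketch at the end is what would actually discharge it.
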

\begin{proof}
We prove this by induction.

1. Initially, when $k = 0$, all slots have the same value.

2. Suppose all slots have the same value at $t_{start}^k$.
By Lemma~\ref{lemma:>=1winner} and Lemma~\ref{lemma:<=1winner}, there must be a single write winner $op$.
Guaranteed by the if-condition in Line~11 and Line~13 of Algorithm~\ref{alg:rep}, only the $op$ with $Win(op) = True$ can further modify the replicated slot.
If $win^{op} = \texttt{Rule\_1}$, then Line~7 of Algorithm~\ref{alg:rep} guarantees that all backups $v_{s_2} = ... = v_{s_r} = v_{new}^{op}$ before the initiation of \texttt{RDMA\_CAS\_primary} (Line~12).
If $win^{op} \in \{\texttt{Rule\_2}, \texttt{Rule\_3}\}$, then the \texttt{RDMA\_CAS\_backups} (Line~14, Algorithm~\ref{alg:rep}) guarantees that all backups $v_{s_2} = ... = v_{s_r} = v_{new}^{op}$ before the \texttt{RDMA\_CAS\_primary} (Line~15).
Since $op$ modifies the primary slot also to $v_{new}^{op}$ (Line~12 or Line~15), after $t_{fini}^0$ we have $v_{s_1} = ... = v_{s_r} = v_{new}^{op}$.
By Definition~\ref{def:r-st}, all modifications happens after $t_{start}^{k+1}$, which implies $v_{s_1} = ... = v_{s_r}$ at $t_{start}^{k+1}$. 
\end{proof}

\begin{lemma}[Exactly one write winner]\label{lemma:=1winner}
$\forall k \in \mathbb{N}$, we have:
\begin{itemize}
    \item $\exists op \in writes(h): R(op) = k, Win(op) = True$.
    \item $\forall op_1, op_2 \in writes(h): R(op_1) = R(op_2) = k \bigwedge Win(op_1) = Win(op_2) = True \implies op_1 = op_2$.
\end{itemize}
\end{lemma}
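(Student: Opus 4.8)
The plan is to derive this lemma as an immediate consequence of the three preceding lemmas, by discharging the common hypothesis ``$v_{s_1} = \dots = v_{s_r}$ at $t_{start}^k$'' that appears in the statements of Lemma~\ref{lemma:>=1winner} and Lemma~\ref{lemma:<=1winner}. That hypothesis is exactly the conclusion of Lemma~\ref{lemma:consistent-round-init}, which already holds unconditionally for every write round. So the substantive work has been done in the earlier lemmas, and the present statement is essentially a repackaging that makes the ``exactly one'' phrasing explicit.

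Concretely, I would fix an arbitrary write round $k$ that is entered by some operation. By Lemma~\ref{lemma:consistent-round-init}, at time $t_{start}^k$ all replicated slots agree, i.e.\ $v_{s_1} = \dots = v_{s_r}$. First, feeding this equality into Lemma~\ref{lemma:>=1winner} yields at least one $op \in writes(h)$ with $R(op) = k$ and $Win(op) = True$, which is the first bullet. Second, feeding the same equality into Lemma~\ref{lemma:<=1winner} yields that any $op_1, op_2$ with $R(op_1) = R(op_2) = k$ and $Win(op_1) = Win(op_2) = True$ satisfy $op_1 = op_2$, which is the second bullet. Since $k$ was arbitrary, the conjunction holds for all rounds, completing the proof.

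There is essentially no hard obstacle here, but two points warrant care and would be the main things to get right. First, \emph{ordering and non-circularity}: Lemma~\ref{lemma:consistent-round-init} is itself proved by induction on $k$, and its inductive step invokes Lemma~\ref{lemma:>=1winner} and Lemma~\ref{lemma:<=1winner} at $t_{start}^k$ under the induction hypothesis that the slots agree at $t_{start}^k$ --- it does \emph{not} invoke Lemma~\ref{lemma:=1winner} --- so the present lemma sits strictly above all three and no circular dependency arises. Second, \emph{empty rounds}: if the execution contains only finitely many writes, a value of $k$ may name a round that no operation ever enters, in which case $t_{start}^k$ is undefined and the existence claim fails for that $k$; the intended reading is therefore ``for every write round $k$ that is actually entered''. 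I would either restate the quantifier that way, or note that the realized rounds form an initial segment of $\mathbb{N}$ --- a fresh round $k$ is entered only because some operation read from the primary slot the value that Lemma~\ref{lemma:1} associates with round $k$, and by the construction in the proof of Lemma~\ref{lemma:consistent-round-init} that value was placed there by the winner of round $k-1$ --- so the statement is meaningful precisely on that prefix of $\mathbb{N}$.
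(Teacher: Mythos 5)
Your proposal is correct and matches the paper's own proof, which likewise obtains the lemma by combining Lemma~\ref{lemma:>=1winner}, Lemma~\ref{lemma:<=1winner}, and Lemma~\ref{lemma:consistent-round-init}, with the last lemma discharging the shared hypothesis that all slots agree at $t_{start}^k$. Your additional remarks on non-circularity and on rounds that are never entered are sensible clarifications but do not change the argument.
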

\begin{proof}
By Lemma~\ref{lemma:>=1winner} + Lemma~\ref{lemma:<=1winner} + Lemma~\ref{lemma:consistent-round-init}.
\end{proof}

\begin{lemma}[Label order respects history]\label{lemma:<l-respects-<h}
$\forall op_1, op_2 \in ops(h), op_1 <_h op_2 \implies op_1 <_l op_2$.
\end{lemma}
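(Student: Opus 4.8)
\textbf{Proof proposal for Lemma~\ref{lemma:<l-respects-<h}.}
The plan is to do a case analysis on the types of $op_1$ and $op_2$ (each being either a read or a write), using the fact that $op_1 <_h op_2$ means $resp(op_1)$ precedes $inv(op_2)$ in real time, and then comparing the operation virtual labels assigned in Definition~\ref{def:ops-label}. The central technical tool is that the slot virtual label is monotone in its first coordinate $u$ (by Definition~\ref{def:label-upd}, $u$ only ever increases, and strictly so whenever the primary slot is modified), so the value of $u$ read by $op_2$ at the relevant \texttt{RDMA\_READ} (Line~2 or Line~6 of Algorithm~\ref{alg:rep}) is at least as large as the value read by $op_1$. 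I would first establish this monotonicity as a preliminary observation, together with the fact that within a fixed round $k$ the third coordinate $w$ (the read counter) is strictly increasing across successive \texttt{RDMA\_READ}s on the primary slot.

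\textbf{Case analysis.} First I would handle $op_1, op_2 \in reads(h)$: since $resp(op_1)$ precedes $inv(op_2)$, the primary-slot \texttt{RDMA\_READ} of $op_1$ strictly precedes that of $op_2$; if they fall in different rounds the first coordinate of $op_2$'s label is strictly larger (using Lemma~\ref{lemma:1} and monotonicity of $u$), and if they fall in the same round then $w$ has strictly increased, so $L_{op_1} <_l L_{op_2}$. Next, for $op_1 \in reads(h)$ and $op_2 \in writes(h)$: the read of $op_1$ completes before $op_2$ invokes, so the $u$ that $op_1$ sees is $\le$ the $u$ that $op_2$ sees in Line~6; if $op_2$ wins its label has first coordinate that $u+1$, strictly dominating $op_1$'s first coordinate; if $op_2$ loses, one needs that $op_2$'s round is strictly later than $op_1$'s read-round or else that $op_1$'s read happened before the primary update of $op_2$'s round's winner, which forces $op_1$'s $u$ to be strictly smaller (here I would invoke Lemma~\ref{lemma:consistent-round-init} and Lemma~\ref{lemma:=1winner} to pin down the round structure). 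Symmetrically for $op_1 \in writes(h), op_2 \in reads(h)$: if $op_1$ wins, then by the time $op_1$ responds the primary slot has been bumped to round $R_{op_1}$, so $op_2$'s subsequent read sees $u \ge R_{op_1} = $ first coordinate of $L_{op_1}$, with the tie-in-$u$ case handled by noting $op_2$'s label has zero in the middle coordinate only when... — actually a read's middle coordinate is always $0$, so I must be careful and instead argue that if $op_1$ wins with label $(R_{op_1},0,0)$ then any later read either sees $u > R_{op_1}$ or sees $u = R_{op_1}$ with $w \ge 1$, giving $(R_{op_1},0,0) <_l (R_{op_1},0,w+1)$; if $op_1$ loses, its label is $(u,1,c_i)$ and a later read sees first coordinate $\ge u$, but if equal the read's middle coordinate $0 < 1$ would go the wrong way, so I need that a \emph{later} read actually sees $u' > u$, which follows because a losing writer only returns after the primary slot changes (Lines~18--22 of Algorithm~\ref{alg:rep}), forcing the round to advance. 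Finally $op_1, op_2 \in writes(h)$: compare $R_{op_1}$ and $R_{op_2}$; since $op_1$ responds before $op_2$ invokes and a writer (winner or loser) returns only after the primary slot has moved past its own round, $R_{op_1} < R_{op_2}$ whenever the rounds differ, and within the same round $op_1$ winning and $op_2$ losing is the only consistent combination given Lemma~\ref{lemma:=1winner}, yielding $(R_{op_1},0,0) <_l (R_{op_1},1,c_j)$.

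\textbf{Main obstacle.} The delicate point is the same-first-coordinate ties, especially those involving losing writers and reads, because a read's label always has middle coordinate $0$ while a losing writer's has middle coordinate $1$; to keep $<_l$ consistent with $<_h$ I must show that in every real-time ordering that could produce such a tie, the later operation in fact sees a \emph{strictly} larger $u$. This hinges on the precise return conditions in Algorithm~\ref{alg:rep} — that both a losing writer (Lines~18--25) and the winner only return after the primary slot has been advanced to their round — together with Lemma~\ref{lemma:=1winner} and Lemma~\ref{lemma:consistent-round-init} to guarantee the round counter genuinely advances between $resp(op_1)$ and any \texttt{RDMA\_READ} performed after $inv(op_2)$. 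I expect this round-advancement argument, done uniformly across the sub-cases, to be the crux of the proof; the rest is bookkeeping on the lexicographic order.
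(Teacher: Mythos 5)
Your proposal is correct and follows essentially the same route as the paper's proof: a four-way case analysis on the read/write types of $op_1$ and $op_2$, driven by monotonicity of the slot label's first coordinate and by the fact that both winners and losers return only after the primary slot has been advanced past their round (the paper phrases this as $resp(op) > t_{fini}^{R(op)}$ and argues by contradiction rather than directly, but the content is identical). You correctly isolate the same crux the paper relies on — that ties in $u$ involving losing writers are resolved by forcing the round to advance before a loser returns — though when writing it out take care that a winner's label has first coordinate $R_{op}+1$, not $R_{op}$.
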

\begin{proof}
Suppose $\exists op_1, op_2 \in ops(h), op_1 <_h op_2 \bigwedge op_2 <_l op_1$. Their labels are $l_1 = (u_1, v_1, w_1)$ and $l_2 = (u_2, v_2, w_2)$.

\noindent
\textbf{CASE 1}: $op_1, op_2 \in reads(h)$.

\noindent
If $u_1 = u_2$, then $op_1 <_h op_2$ implies that the \texttt{RDMA\_READ} (Line~2, Algorithm~\ref{alg:rep}) of $op_1$ happens before the \texttt{RDMA\_READ} of $op_2$. 
By Definition~\ref{def:label-upd}, \texttt{RDMA\_READ}s in Line~2 of Algorithm~\ref{alg:rep} monotonically increase $w$ of the virtual slot label. Then we must have $w_1 < w_2$. Since $u_1 = u_2, v_1 = v_2 = 0, w_1 < w_2$, $op_1 <_l op_2$, contradicts with the assumption.
If $u_1 > u_2$, then $\exists t_{fini}^{u_2}: resp(op_1) < t_{fini}^{u_2} < inv(op_2)$. 
By Definition~\ref{def:label-upd}, $u_1 \leq u_2$, contradicts with $u_1 > u_2$.

\noindent
\textbf{CASE 2}: $op_1 \in reads(h), op_2 \in writes(h)$.

\noindent
If $Win(op_2) = True$, then $op_1 <_h op_2$ implies that $resp(op_1) < t_{fini}^{u_2 - 1}$. Then $u_1 \leq u_2 - 1$ contradicts with $op_2 <_l op_1$.
Otherwise $Win(op_2) = False$, then $op_1 <_h op_2$ implies that $resp(op_1) < t_{fini}^{u_2}$. Since $u_1 \leq u_2$ and $0 = v_1 < v_2 = 1$ by Definition~\ref{def:ops-label}, there is a contradiction with $op_2 <_l op_1$.

\noindent
\textbf{CASE 3}: $op_1 \in writes(h), op_2 \in reads(h)$.

\noindent
Line~17-22 of Algorithm~\ref{alg:rep} ensures that $\forall op \in writes(h), R(op) = k \implies resp(op) > t_{fini}^k$.
If $Win(op_1) = True$, then $op_1 <_h op_2$ implies that $t_{fini}^{u_1-1} < inv(op_2)$ and $l_1 = (u_1, 0, 0)$. Consequently, $u_1 \leq u_2 \bigwedge v_1 = v_2 = 0 \bigwedge 0 = w_1 < 1 \leq w_2$ contradicts with $op_2 <_l op_1$.
Otherwise $Win(op_1) = False$, then $op_1 <_h op_2$ implies that $t_{fini}^{u_1} < inv(op_2)$. Then $u_1 + 1 \leq u_2$ which contradicts with $op_2 <_l op_1$.

\noindent
\textbf{CASE 4}: $op_1, op_2 \in writes(h)$.

\noindent
If $Win(op_1) = True$, then $op_1 <_h op_2 \implies t_{fini}^{u_1 - 1} < inv(op_2) \implies u_1 \leq u_2$. Also $Win(op_1) = True \implies l_1 = (u_1, 0, 0)$. Then $u_1 \leq u_2, 0 = v_1 < v_2 = 1$ contradicts with $op_2 <_l op_1$.
Otherwise $Win(op_1) = False$, then $op_1 <_h op_2 \implies t_{fini}^{u_1} < inv(op_2) \implies u_1 + 1 \leq u_2 \implies u_1 < u_2$, which contradicts with $op_2 <_l op_1$.
\end{proof}

\begin{lemma}[Label order is a legal order]\label{lemma:legal<l}
The label order $<_l$ is a legal total order of $ops(h)$.
\end{lemma}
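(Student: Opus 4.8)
The plan is to establish two things: (i) $<_l$ is a strict total order on $ops(h)$, and (ii) the sequential history $h_\tau$ obtained by listing the operations of $ops(h)$ in $<_l$-order, each invocation immediately followed by its matching response, is legal, i.e.\ it belongs to the sequential specification of the replicated slot (Definition~\ref{def:repslot-spec}); the single-object clause of legality is vacuous here since there is exactly one slot and $h_\tau|o = h_\tau$, and in the failure-free setting every operation in $ops(h)$ is complete. Since the lexicographic order on $\mathbb{N}^3$ is already a strict total order, (i) reduces to showing that the label assignment $op \mapsto L_{op}$ of Definition~\ref{def:ops-label} is \emph{injective} on $ops(h)$; irreflexivity, transitivity, and totality of $<_l$ then follow for free.

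First I would prove injectivity by a case split on operation types. A read and a winning write can never share a label because a read label $(u,0,w+1)$ has third component $w+1 \ge 1$ while a winner label $(u+1,0,0)$ has third component $0$; a read versus a losing write, and a winning versus a losing write, differ already in the middle component ($0$ vs.\ $1$). Two distinct reads with the same first component $u$ receive distinct third components, because by Definition~\ref{def:label-upd} each \texttt{RDMA\_READ} in Line~2 strictly increments $w$, and $w$ is reset only when $u$ advances. Two winning writes in the same round are excluded by Lemma~\ref{lemma:=1winner}, and distinct rounds give distinct first components. For two losing writes, distinct rounds again give distinct first components and distinct clients give distinct third components; the only remaining case—two losing writes of the same client in the same round—is ruled out because (as used in the proof of Lemma~\ref{lemma:<l-respects-<h}) a loser returns only after the winner has modified the primary slot, i.e.\ after $t_{fini}$ of its round, at which point Definition~\ref{def:label-upd} has already advanced the first component, so that client's next write lies in a strictly larger round. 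This exhausts the cases and gives (i).

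For legality the key auxiliary claim I would prove, by induction along the execution timeline, is: whenever the virtual slot label has first component $u$, the primary slot holds the value written by the unique winner of round $u-1$ when $u\ge 1$, and the slot's initial value when $u=0$; moreover, by Lemma~\ref{lemma:1}, that value is exactly the one carried by the write whose label is $(u,0,0)$. Granting this, fix a read $op$ with $L_{op}=(u,0,w+1)$. By Definition~\ref{def:ops-label}, $(u,\cdot,\cdot)$ is the slot label when $op$ issues its \texttt{RDMA\_READ} in Line~2, so $resp(op)$ contains precisely that value. It remains to identify the $<_l$-latest write preceding $op$ in $h_\tau$: the writes with label $<_l (u,0,w+1)$ are exactly the winners of rounds $\le u-1$ (labels $(1,0,0),\dots,(u,0,0)$) together with the losers of rounds $\le u-1$ (labels $(k,1,c_i)$ with $k\le u-1$), since winners of rounds $\ge u$ have first component $\ge u+1$ and losers of round $u$ have label $(u,1,\cdot)>_l(u,0,w+1)$; among these the $<_l$-maximum is the round-$(u-1)$ winner with label $(u,0,0)$, whose written value equals $resp(op)$. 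When $u=0$, no write precedes $op$ and it returns the initial value, again matching Definition~\ref{def:repslot-spec}; writes themselves are unconstrained by the specification. Hence $h_\tau$ is legal, and together with (i) the lemma follows.

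The main obstacle will be the auxiliary claim linking the first component $u$ of the virtual label to the current contents of the primary slot. It rests on reading Definition~\ref{def:label-upd} as incrementing $u$ \emph{atomically with} the successful primary \texttt{RDMA\_CAS}, and it must be carried through an induction over primary modifications, each step invoking Lemma~\ref{lemma:consistent-round-init} to know the backups agree at round start (so Lemma~\ref{lemma:=1winner} applies) and carefully tracking the off-by-one between "round $k$'s winner commits" and "the label's first component becomes $k+1$". Once this bookkeeping is pinned down, the remainder is a routine comparison of lexicographic labels.
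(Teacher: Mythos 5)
Your proof is correct and follows essentially the same route as the paper's: both reduce legality to the facts that a read labelled $(u,0,w+1)$ observes the round-$(u-1)$ winner (via Lemma~\ref{lemma:1} and the label-update rules) and that no write can sit strictly between that winner and the read in $<_l$ (via Lemma~\ref{lemma:=1winner} and a comparison of the $v$-components), with your version phrased as ``identify the $<_l$-maximal preceding write'' where the paper argues by contradiction. The only substantive addition is that you also verify injectivity of the label assignment to establish that $<_l$ is a total order, a point the paper's proof leaves implicit.
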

\begin{proof}
By Definition~\ref{def:repslot-spec}, let $r \in reads(h)$ be an operation that observes a write $k \in writes(h)$, $r$ is completed.
Suppose $\exists k' \in writes(h)$ such that $k <_l k' <_l r$. 
Let $l_k = (u_k, v_k, w_k), l_{k'} = (u_k', v_k', w_k'), l_r = (u_r, v_r, w_r)$.
By Lemma~\ref{lemma:1}, $r$ observes a value if and only if it returns the corresponding virtual label in Line 2 of Algorithm~\ref{alg:rep}.
By Definition~\ref{def:label-upd} and Lemma~\ref{lemma:=1winner}, the label is exclusively updated by a single writer, which implies $Win(k) = True, t_{fini}^{u_k - 1} < inv(r)$ and $u_k = u_r$.
Since $k <_l k' <_l r$ and $u_k = u_r$, $u_k = u_k'= u_r$.
If $Win(k') = True$, then $Win(k) = Win(k') = True \bigwedge R(k) = R(k') = u_k - 1$, which contradicts with Lemma~\ref{lemma:=1winner}.
Otherwise $Win(k') = False$ then by Definition~\ref{def:ops-label} $v_k' = 1$.
$u_k' = u_r, v_k' = 1 > 0 = v_r$ contradicts with $k' <_l r$.
As a result, there is no $k'$ such that $k <_l k' <_l r$.
\end{proof}

\begin{theorem}
\kv implements a replicated slot with linearizability.
\end{theorem}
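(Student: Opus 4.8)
The plan is to take the label order ${<_l}$ as the witnessing total order $\tau$ required by Definition~\ref{def:lin}, so that the proof of the theorem is essentially a short assembly of the lemmas already established, together with one gap‑filling step showing that ${<_l}$ genuinely totally orders $ops(h)$. Concretely, to show that a complete (failure‑free) history $h$ is linearizable I must exhibit a legal total order $\tau$ on $ops(h)$ with $op_1 <_h op_2 \Rightarrow op_1 <_\tau op_2$, and I claim $\tau = {<_l}$ works. Lemma~\ref{lemma:<l-respects-<h} is precisely the real‑time‑order condition $op_1 <_h op_2 \Rightarrow op_1 <_l op_2$, and Lemma~\ref{lemma:legal<l} states that ${<_l}$ is a legal total order in the sense of the replicated‑slot specification (Definition~\ref{def:repslot-spec}); so once ${<_l}$ is known to be a total order on $ops(h)$, the theorem is immediate.

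The one piece deserving explicit care is that the operation virtual labels of Definition~\ref{def:ops-label} are pairwise distinct, so that lexicographic comparison of triples in $\mathbb{N}^3$ gives a strict total order rather than merely a preorder. I would argue this by cases on the three label families. A winning write carries a label $(m,0,0)$ whose first coordinate $m$ is in bijection with write rounds: by Lemma~\ref{lemma:1} distinct primary‑slot values correspond to distinct rounds, and by Lemma~\ref{lemma:=1winner} each round has exactly one winner, so no two winners collide. A losing write carries $(m,1,c_i)$; two losers of the same round differ in the client identifier $c_i$, and losers of different rounds differ in $m$. A read that observes a given primary value carries $(m,0,w)$ with $w\ge 1$, where $w$ is the slot label's read counter at the time of its \texttt{RDMA\_READ}; since Definition~\ref{def:label-upd} increments this counter on every such \texttt{RDMA\_READ} of the primary slot, distinct reads in the same round get distinct $w$. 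Finally the three families never collide because the second coordinate separates losers ($v=1$) from reads and winners ($v=0$), and within $v=0$ a winner has third coordinate $0$ while a read has third coordinate $\ge 1$. Hence ${<_l}$ is a strict total order on $ops(h)$. (If one prefers to sidestep this, any linear extension of the label preorder still respects $<_h$ by Lemma~\ref{lemma:<l-respects-<h} and remains legal, since the argument of Lemma~\ref{lemma:legal<l} uses only that distinct writes receive distinct labels.)

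Putting the pieces together: with $\tau = {<_l}$, Lemma~\ref{lemma:<l-respects-<h} gives $op_1 <_h op_2 \Rightarrow op_1 <_\tau op_2$ and Lemma~\ref{lemma:legal<l} gives that $\tau$ is a legal total order, so by Definition~\ref{def:lin} every complete failure‑free history of \rep on a single slot is linearizable; hence \rep implements a linearizable replicated slot, and since \kv's replicated index is a family of independent replicated slots the conclusion holds index‑wide. The extension to executions with up to $r-1$ slot crashes then follows from Section~\ref{sec:execution}: the fault‑tolerant master acts as the unique representative last writer for the round in progress, so the per‑round winner structure underlying all the lemmas is preserved. I expect the main obstacle to be precisely the well‑definedness of the ``slot virtual label'' as a single global quantity under concurrency --- i.e., justifying that the round counter and the per‑round read counter are induced by a genuine serialization of the \texttt{RDMA\_READ}/\texttt{RDMA\_CAS} operations on the primary slot by the RNIC --- since every operation label, and therefore the whole order ${<_l}$, is defined relative to it.
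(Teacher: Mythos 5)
Your proposal is correct and takes essentially the same route as the paper, whose entire proof of the theorem is the one-line combination of Lemma~\ref{lemma:legal<l} (legality of $<_l$) and Lemma~\ref{lemma:<l-respects-<h} (that $<_l$ respects $<_h$), with $\tau = {<_l}$ as the witnessing order. Your additional case analysis showing that operation labels are pairwise distinct (so that $<_l$ is genuinely total) fills in a step the paper leaves implicit in the statement of Lemma~\ref{lemma:legal<l}, but it does not change the argument.
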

\begin{proof}
By Lemma~\ref{lemma:legal<l} and Lemma~\ref{lemma:<l-respects-<h}.
\end{proof}

\begin{algorithm}[H]
\begin{algorithmic}[1]
\caption{The master process.}\label{alg:mastr}
\Procedure{Master}{}
  \If {MN failed or received client fail\_query}
    \State \textit{send} \texttt{member\_prepare\_change} to all clients
    \State \textit{wait} all clients reply or membership lease expires
    \State \textit{select} a slot $s$ randomly in alive backups
    \State \textit{modify} all slots $s_i = s$
    \State \textit{commit} the operation log of $write(s)$
    \State \textit{select} new primary and backup
    \State \textit{reply} all clients' fail\_query with a new value $s$
    \State \textit{send} \texttt{member\_commit\_change} to all clients with the new membership
  \EndIf
  \If {Clients failed}
    \State \textit{start} \texttt{RecoverClient}($log$) thread.
  \EndIf
\EndProcedure
\Procedure{RecoverClient}{$log$}
  \If {The log is committed}
    \State \Return{}
  \EndIf
  \State $slot$ is the slot in the log
  \State $v_{old}$ is the \textit{old value} in the log
  \State $v_{new}$ is the new value in the log
  \State $s_0 = \texttt{RDMA\_READ\_primary}(slot)$
  \If {$\texttt{FAIL} \in \{s_0, bk\_list\}$}
    \State \textit{send} FailReq to master.
  \ElsIf {$v_{old}$ is incomplete}
    \State \textit{retry} the operation
  \ElsIf {$v_{old} = s_0$}
    \State $\texttt{RDMA\_CAS\_primary}(slot, s_0, v_{new})$
  \EndIf
  \State \Return{}
\EndProcedure
\end{algorithmic}
\end{algorithm}
%
\begin{algorithm}[h]
\begin{algorithmic}[1]
\caption{\rep with failure handling}\label{alg:rep-full}
\small
\Procedure{READ}{$slot$}
  \State $v=\texttt{RDMA\_READ\_primary}(slot)$
  \If {$v=\texttt{FAIL}$}
    \State $v\_list=\texttt{RDMA\_READ\_backups}(slot)$
    \If {All backups have the same value $v'$}
      \State $v=v'$
    \Else
      \State $v = \texttt{RPC\_fail\_query}(master, slot)$
      \State \textit{wait} for membership change
    \EndIf
  \EndIf
  \State \Return{$v$}
\EndProcedure
\Procedure{WRITE}{$slot,v_{new}$}
  \State $v_{old} = \texttt{RDMA\_READ\_primary}(slot)$
  \If {$v_{old} = \texttt{FAIL}$}
    \State \textit{wait} for membership change
    \State \textit{retry} WRITE
  \EndIf
  \State $v\_list = \texttt{RDMA\_CAS\_backups}(slot, v_{old}, v_{new})$
  \LineComment{Change all the $v_{old}$s in the $v\_list$ to $v_{new}$s.}
  \State $v\_list = \texttt{change\_list\_value}(v\_list, v_{old}, v_{new})$
  \State $win=\texttt{evaluate\_rules}(v\_list)$
  \If {$win=\texttt{Rule\_1}$}
    \State $\texttt{RDMA\_CAS\_primary}(slot, v_{old}, v_{new}$
  \ElsIf {$win \in \{\texttt{Rule\_2}, \texttt{Rule\_3}\}$}
    \State $\texttt{RDMA\_CAS\_backups}(slot, v\_list, v_{new})$
    \State $\texttt{RDMA\_CAS\_primary}(slot, v_{old}, v_{new})$
  \ElsIf {$win = \texttt{LOSE}$}
    \Repeat
      \State sleep a little bit
      \State $v_{check}=\texttt{RDMA\_READ\_primary}(slot)$
      \If {\textit{receive} \texttt{member\_prepare\_change}()}
        \State goto Line~35
      \EndIf
    \Until {$v_{check} \neq v_{old}$}
    \If {$v_{check} = \texttt{FAIL}$}
      \State goto Line~35
    \EndIf
  \ElsIf {$win = \texttt{FAIL}$}
    \State $v_{RPC} = \texttt{RPC\_fail\_query}(master, slot)$
    \State \textit{wait} for membership change
    \If {$v_{RPC} = v_{old}$}
      \State \textit{retry} WRITE
    \EndIf
  \EndIf
  \State \Return{}
\EndProcedure
\end{algorithmic}
\end{algorithm}

\subsection{Correctness of Failure Recovery}
\noindent
The recovery of \kv relies on a fault-tolerant management master, which is a common assumption on membership-based replication protocols like Chain Replication~\cite{cr,craq} and Hermes~\cite{hermes}.
The master adopts a lease-based membership service~\cite{ukharon} for clients and MNs.
The lease-based membership service provides a view of all alive MNs to clients.
Clients check and extend their leases before performing each \textit{read} and \textit{write}.
The master can detect the failures of clients and MNs when a client or MN no longer extend their leases.
The pseudo-code of the master is shown in Algorithm~\ref{alg:mastr}, and the full process of clients is shown in Algorithm~\ref{alg:rep-full}.
The key to guaranteeing correctness under failures is that the single-winner condition is not violated.

\subsubsection{Handling MN Crashes}
\noindent
The failure handling process consists of three phases, {\em i.e.}, a disconnection phase (Line 2-4, Algorithm~\ref{alg:mastr}), a slot-modification phase (Line 5-7, Algorithm~\ref{alg:mastr}), and a notification phase (Line 8-10, Algorithm~\ref{alg:mastr}).
The disconnection phase starts with the master sending a \textit{member\_prepare\_change} to all clients and waiting for a lease expiration.
On receiving the \textit{member\_prepare\_change}, clients stop their future \textit{write}s to the failed slot.
After the lease expires, no clients can further modify the failed slot because the failed slot is excluded from the membership view.
The currently executing write operations will also be stopped and wait for the final value to be decided by the master (Line~35-28, Algorithm~\ref{alg:rep-full}).
Hence, all clients are disconnected to the crashed slot.

During the modification phase, the master randomly selects a value in an alive backup slot and uses the selected value to make all alive slots consistent.
Choosing values from backup slots is always safe because backup slots contain no older values than the latest committed value (value in the primary slot before it crashed).
This is guaranteed by the \rep replication protocol that the write conflicts are resolved in the backup slots and then written to the primary slot.
If there are no alive backup slots, then there must have only one survivor as the primary slot.
In both cases, either a latest committed value or a fresh uncommitted value (value written by conflicting writers) is chosen.
When a fresh uncommitted value is chosen, the master is the representative last writer and finishes the last writer's job on a client's behalf.
Since no other client can modify the slot, the master is the only last writer.
When a committed value is chosen, the value will be replied to clients and clients will retry their newer WRITEs (Line~38, Algorithm~\ref{alg:rep-full}) on receiving an old value.
In this case, the new write round of clients is considered not started.
The retry of \textit{write}s guarantees that the clients' newer write operations are executed before they are returned.
The single last write will then be decided among clients through the normal execution of the \rep replication protocol.
After modifying all the values in the replicated slots, the master commits the operation by writing a special value ($1$) to the \textit{old value} field of the log header.
This guarantees that clients will never retry an operation finished by the master.

The attribute of exactly one winner on each write round is not violated.
Line~4 of Algorithm~\ref{alg:rep-eval} ensures that a winner can only be decided when no backup slot crashes.
Line~29 and Line~35-38 of Algorithm~\ref{alg:rep-full} ensure that all losers and failed operations wait for the decided value returned by the master.
If there is a winner and the winner finishes execution before the disconnection phase of the master, then the master can only choose no older value than the winner's committed value because all old values are modified to the winner's proposed new value.
If the winner does not finish execution before the disconnection phase of the master, then it will be disconnected and deemed also as a loser.
In this case, the master becomes a representative winner that decides a unique winner value to all other waiting losers.
If there is no winner due to the backup slot failures, then the master will also become a representative winner that decides a unique winner.

During MN crashes, \texttt{read}s can also get the latest committed value in the slot, thus ensuring linearizability.
The latest committed value is defined as the last value stored in the primary slot before it crashes.
Since the primary slot is lastly modified in a write round (Line~21 and Line~24, Algorithm~\ref{alg:rep-full}), it always contains the latest committed value. 
If the primary slot is alive, then \texttt{read}s execute normally by reading the content in the primary slot (Line 2, Algorithm~\ref{alg:rep-full}).
Linearizability is guaranteed by the \rep replication protocol.
Otherwise, \texttt{read}s use \texttt{RDMA\_READ}s to read all backup slots and return a value only if all the values in the backup slot are the same.
This reflects the situation when there are no write conflicts, and hence the value must be committed.
Under the situation when values in backup slots are not the same, clients send RPCs to the master to let the master commit a value and return it to clients.

\subsubsection{Handling Client Crashes}
\noindent
\kv uses per-client operation logs to recover crashed client operations.
The $v_{old}$ is written to the log header by the write winner before the execution of \texttt{RDMA\_CAS\_primary} (Line~21 and Line~24, Algorithm~\ref{alg:rep-full}).
$0$ is written to the \textit{used bit} of the log headers before the losers return their operations.
During the recovery of a crashed client, memory objects with unset \textit{used} bits are reclaimed because they are either incomplete data or free data.
Operations in the operation log with an incomplete $v_{old}$ are redone.
These operations may belong to a loser or a winner.
Redoing a crashed operation of a loser is safe because the losers' operations have not been returned to clients and cannot be observed by other clients.
Redoing such operations of a winner is also safe because the winner's operation also cannot be observed by other clients due to the unmodified primary slot.
Under this situation, the retried operation will also become a winner due to Lemma~\ref{lemma:=1winner}.

If the \textit{old value} field is set, then the crashed client must be a write winner since only the winner commits the log~\ref{sec:embedlog}.
However, for a winner, it is possible that it crashed before the primary slot was modified.
Under this situation, all backup slots contain $v_{new}$, and the primary slot contains the $v_{old}$.
Hence, \kv checks if the primary slot equals $v_{old}$ and modifies the primary slot to be $v_{new}$ if it is the case.
Otherwise, the operation must have been finished because the primary slot is modified in a new write round.

\subsubsection{Handling Client and MN Crashes}
\noindent
When handling concurrent client and MN failures, the master reconfigures MNs to use a decreased replication factor.
After the reconfiguration, the master recovers the crashed client.
The only conflict between recovering MNs and clients is the case that the master decides the value written by a client, and the client is crashed.
In this situation, a winner client is crashed and ignorant of its winning.
The operations may be retired, causing old values to be written to the slots.
\kv addresses this issue by letting the master commit the operation by writing a $v_{old} = 0$ to the \textit{old value} field of the log header.
As a result, when recovering the winner client using the log, it will not execute the operation twice.

\end{document}